\newcolumntype{+}{!{\vrule width 2pt}}
\newlength\savedwidth%
\newcommand\thickhline{\noalign{\global\savedwidth\arrayrulewidth\global\arrayrulewidth 1.5pt}%
\hline%
\noalign{\global\arrayrulewidth\savedwidth}}
\renewcommand{\@biblabel}[1]{\quad#1.}
\date{}
\def\maxwidth{\ifdim\Gin@nat@width>\linewidth\linewidth\else\Gin@nat@width\fi}
\def\maxheight{\ifdim\Gin@nat@height>\textheight\textheight\else\Gin@nat@height\fi}
\def\th@plain{%
  \thm@notefont{}
  \itshape 
}
\def\th@definition{%
  \thm@notefont{}
  \normalfont
}
\newtheorem*{theorem*}{Theorem}
\newtheorem*{definition*}{Definition}
\newcommand*{\ie}{\emph{i.e.}\xspace}
\newcommand*{\eg}{\emph{e.g.}\xspace}
\newcommand*{\iit}{\textsc{IIT}\xspace}
\newcommand*{\ces}{\textsc{CES}\xspace}
\newcommand*{\tpm}{\textsc{TPM}\xspace}
\newcommand*{\cm}{\textsc{CM}\xspace}
\newcommand*{\mip}{\textsc{MIP}\xspace}
\newcommand*{\mics}{\textsc{MICS}\xspace}
\newcommand*{\mic}{\textsc{MIC}\xspace}
\newcommand*{\mie}{\textsc{MIE}\xspace}
\newcommand*{\otherwise}{\text{otherwise}}
\newcommand*{\by}{\hspace{0.05em}{\times}\hspace{0.05em}}
\newcommand*{\cut}{\text{cut}}
\newcommand*{\emd}{\textsc{EMD}\xspace}
\DeclareMathOperator*{\emdmath}{\operatorname{\textsc{EMD}}}
\DeclareMathOperator*{\cmmath}{\operatorname{\textsc{CM}}}
\DeclareMathOperator*{\er}{\operatorname{\textsc{ER}}}
\newcommand*{\effectrepertoire}{{\texttt{effect\_repertoire}}\xspace}
\newcommand*{\causerepertoire}{{\texttt{cause\_repertoire}}\xspace}
\newcommand*{\computesia}{{\texttt{pyphi.compute.sia()}}\xspace}
\newcommand*{\computemajorcomplex}{{\texttt{pyphi.compute.major\_complex()}}\xspace}
\newcommand*{\pyphiconcept}{{\texttt{Concept}}\xspace}
\newcommand*{\pyphiconcepts}{{\texttt{Concepts}}\xspace}
\newcommand*{\pyphices}{{\texttt{CauseEffectStructure}}\xspace}
\newcommand*{\pyphisia}{{\texttt{SystemIrreducibilityAnalysis}}\xspace}
\newcommand*{\pyphiria}{{\texttt{RepertoireIrreducibilityAnalysis}}\xspace}
\newcommand*{\sectionref}[1]{\S\,\nameref{#1}}
\newcommand*{\siref}[1]{\nameref{#1}}
\begin{document}

\begin{flushleft}
{\Large
\textbf{PyPhi: A toolbox for integrated information theory}
}
\newline
\\
William G.\ P.\ Mayner\textsuperscript{1,2,*},
William Marshall\textsuperscript{2},
Larissa Albantakis\textsuperscript{2},
Graham Findlay\textsuperscript{1,2},
Robert Marchman\textsuperscript{2},
Giulio Tononi\textsuperscript{2,*}
\\
\bigskip
\footnotesize{
\begin{itemize}[leftmargin=*,itemsep=0pt,topsep=0pt,labelsep=1pt]
\item[\textsuperscript{1}] Neuroscience Training Program, University of Wisconsin--Madison, Madison, WI, USA
\item[\textsuperscript{2}] Department of Psychiatry, Wisconsin Institute for Sleep and Consciousness, University of Wisconsin--Madison, Madison, WI, USA
\item[*] \texttt{mayner@wisc.edu}, *\texttt{gtononi@wisc.edu}
\end{itemize}
}
\bigskip
\end{flushleft}

\subsubsection*{Abstract}
Integrated information theory provides a mathematical framework to fully characterize the cause-effect structure of a physical system. Here, we introduce \emph{PyPhi}, a Python software package that implements this framework for causal analysis and unfolds the full cause-effect structure of discrete dynamical systems of binary elements. The software allows users to easily study these structures, serves as an up-to-date reference implementation of the formalisms of integrated information theory, and has been applied in research on complexity, emergence, and certain biological questions. We first provide an overview of the main algorithm and demonstrate PyPhi's functionality in the course of analyzing an example system, and then describe details of the algorithm's design and implementation.

PyPhi can be installed with Python’s package manager via the command `\texttt{pip~install~pyphi}' on Linux and macOS systems equipped with Python 3.4 or higher. PyPhi is open-source and licensed under the GPLv3; the source code is hosted on GitHub at \texttt{\small\url{https://github.com/wmayner/pyphi}}. Comprehensive and continually-updated documentation is available at \texttt{\small\url{https://pyphi.readthedocs.io}}. The \texttt{pyphi-users} mailing list can be joined at \texttt{\small\url{https://groups.google.com/forum/\#!forum/pyphi-users}}. A web-based graphical interface to the software is available at \texttt{\small\url{http://integratedinformationtheory.org/calculate.html}}.

\hypertarget{sec:introduction}{%
\section{Introduction}\label{sec:introduction}
}

Integrated information theory (\iit) has been proposed as a theory of consciousness. The central hypothesis is that a physical system has to meet five requirements (`postulates') in order to be a physical substrate of subjective experience:
\begin{enumerate*}[label={(\arabic*)}]
  \item \emph{intrinsic existence} (the system must be able to make a difference to itself); 
  \item \emph{composition} (it must be composed of parts that have causal power within the whole);
  \item \emph{information} (its causal power must be specific);
  \item \emph{integration} (its causal power must not be reducible to that of its parts); and
  \item \emph{exclusion} (it must be maximally irreducible)
\end{enumerate*}~\cite{tononi2016integrated,tononi2015integrated,oizumi2014phenomenology,balduzzi2008integrated,tononi2004information}.

From these postulates, \iit develops a mathematical framework to assess the cause-effect structure (\ces) of a physical system that is applicable to discrete dynamical systems. This framework has proven useful not only for the study of consciousness but has also been applied in research on complexity~\cite{albantakis2015intrinsic,albantakis2014evolution,oizumi2016unified,albantakis2017automata}, emergence~\cite{hoel2016can,hoel2013quantifying,marshall2016black}, and certain biological questions~\cite{marshall2017how}.

The main measure of cause-effect power, \emph{integrated information} (denoted \(\Phi\)), quantifies how irreducible a system's \ces is to those of its parts. \(\Phi\) also serves as a general measure of complexity that captures to what extent a system is both integrated~\cite{albantakis2015intrinsic} and differentiated (informative)~\cite{marshall2016integrated}. 

Here we describe \emph{PyPhi}, a Python software package that implements {\iit}'s framework for causal analysis and unfolds the full \ces of discrete dynamical systems of binary elements. The software allows users to easily study these {\ces}s and serves as an up-to-date reference implementation of the formalisms of \iit.

Details of the mathematical framework are published elsewhere~\cite{oizumi2014phenomenology,tononi2016integrated}; in \sectionref{sec:results} we describe the output and input of the software and give an overview of the main algorithm in the course of reproducing results obtained from an example system studied in~\cite{oizumi2014phenomenology}. In \sectionref{sec:design} we discuss specific issues concerning the algorithm's implementation. Finally in \sectionref{sec:availability} we describe how the software can be obtained and discuss new functionality planned for future versions.

\hypertarget{sec:results}{%
\section{Results}\label{sec:results}
}

\hypertarget{sec:output}{%
\subsection{Output}\label{sec:output}
}

The software has two primary functions:
\begin{enumerate*}[label={(\arabic*)}]
  \item to unfold the full \ces of a discrete dynamical system of interacting elements and compute its \(\Phi\) value, and
  \item to compute the maximally-irreducible cause-effect repertoires of a particular set of elements within the system.
\end{enumerate*}
The first is function is implemented by \computemajorcomplex, which returns a \pyphisia object (\cref{fig:object-hierarchy}A). The system's \ces is contained in the `\texttt{ces}' attribute and its \(\Phi\) value is contained in `\texttt{phi}'. Other attributes are detailed in the \href{https://pyphi.readthedocs.io/page/api/models.subsystem.html\#pyphi.models.subsystem.SystemIrreducibilityAnalysis}{online documentation}.

The \ces is composed of \pyphiconcept objects, which are the output of the second main function: \texttt{Subsystem.concept()} (\cref{fig:object-hierarchy}B). Each \pyphiconcept is specified by a set of elements within the system (contained in its `\texttt{mechanism}' attribute). A \pyphiconcept contains a maximally-irreducible cause and effect repertoire (`\causerepertoire' and `\effectrepertoire'), which are probability distributions that capture how the mechanism elements in their current state constrain the previous and next state of the system, respectively; a \(\varphi\) value (`\texttt{phi}'), which measures the irreducibility of the repertoires; and several other attributes discussed below and detailed in the \href{https://pyphi.readthedocs.io/page/api/models.mechanism.html\#pyphi.models.mechanism.Concept}{online documentation}.

\hypertarget{fig:object-hierarchy}{%
\begin{figure}[!h]
\centering
\begin{adjustwidth}{-2in}{0in}
\includegraphics[width=7.25in]{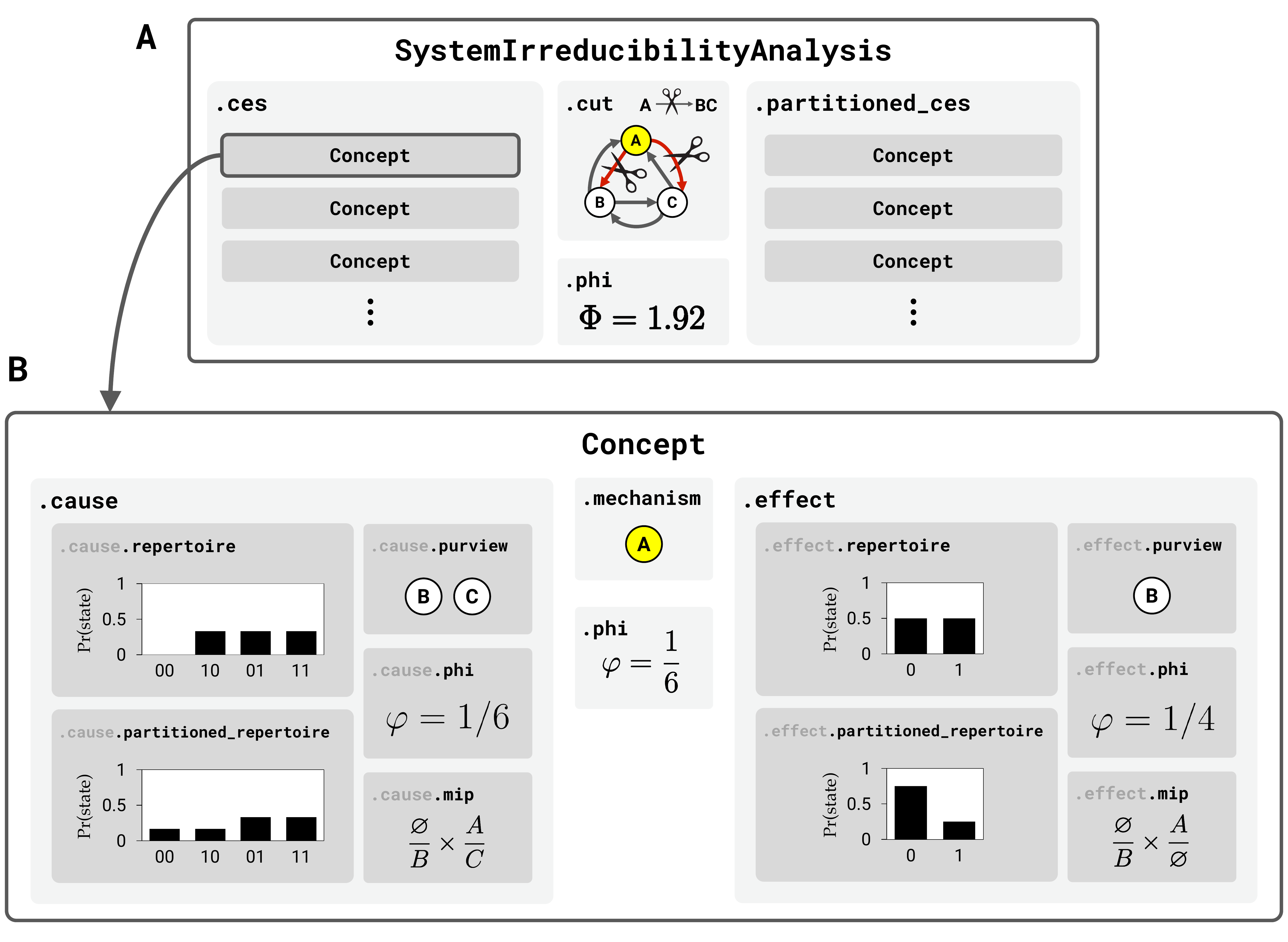}
\end{adjustwidth}
\caption{{\bf Output.} \textbf{(A)}\ The \pyphisia object is the main output of the software. It represents the results of the analysis of the system in question. It has several attributes (grey boxes): `\texttt{ces}' is a \pyphices object containing all of the system's \pyphiconcepts; `\texttt{cut}' is a \texttt{Cut} object that represents the minimum-information partition (\mip) of the system (the partition of the system that makes the least difference to its \ces); `\texttt{partitioned\_ces}' is the \pyphices of \pyphiconcepts specified by the system after applying the \mip; and `\texttt{phi}' is the \(\Phi\) value, which measures the difference between the unpartitioned and partitioned \ces. \textbf{(B)}\ A \pyphiconcept represents the maximally-irreducible cause (\mic) and maximally-irreducible effect (\mie) of a mechanism in a state. The `\texttt{mechanism}' attribute contains the indices of the mechanism elements. The `\texttt{cause}' and `\texttt{effect}' attributes contain \texttt{MaximallyIrreducibleCause} and \texttt{MaximallyIrreducibleEffect} objects that describe the mechanism's \mic and \mie, respectively; each of these contains a purview, repertoire, \mip, partitioned repertoire, and \(\varphi\) value. The `\texttt{phi}' attribute contains the concept's \(\varphi\) value, which is the minimum of the \(\varphi\) values of the \mic and \mie.}\label{fig:object-hierarchy}
\end{figure}
}

\hypertarget{sec:input}{%
\subsection{Input}\label{sec:input}
}

The starting point for the \iit analysis is a discrete dynamical system \(S\) composed of \(n\) interacting elements. Such a system can be represented by a directed graph of interconnected nodes, each equipped with a Markovian function that outputs the node's state at the next timestep \(t+1\) given the state of its parents at the previous timestep \(t\) (\cref{fig:tpm}). At present, PyPhi can analyze both deterministic and stochastic systems consisting of elements with two states.

\hypertarget{fig:tpm}{%
\begin{figure}[!h]
\centering
\begin{adjustwidth}{-2in}{0in}
\includegraphics[width=7.25in]{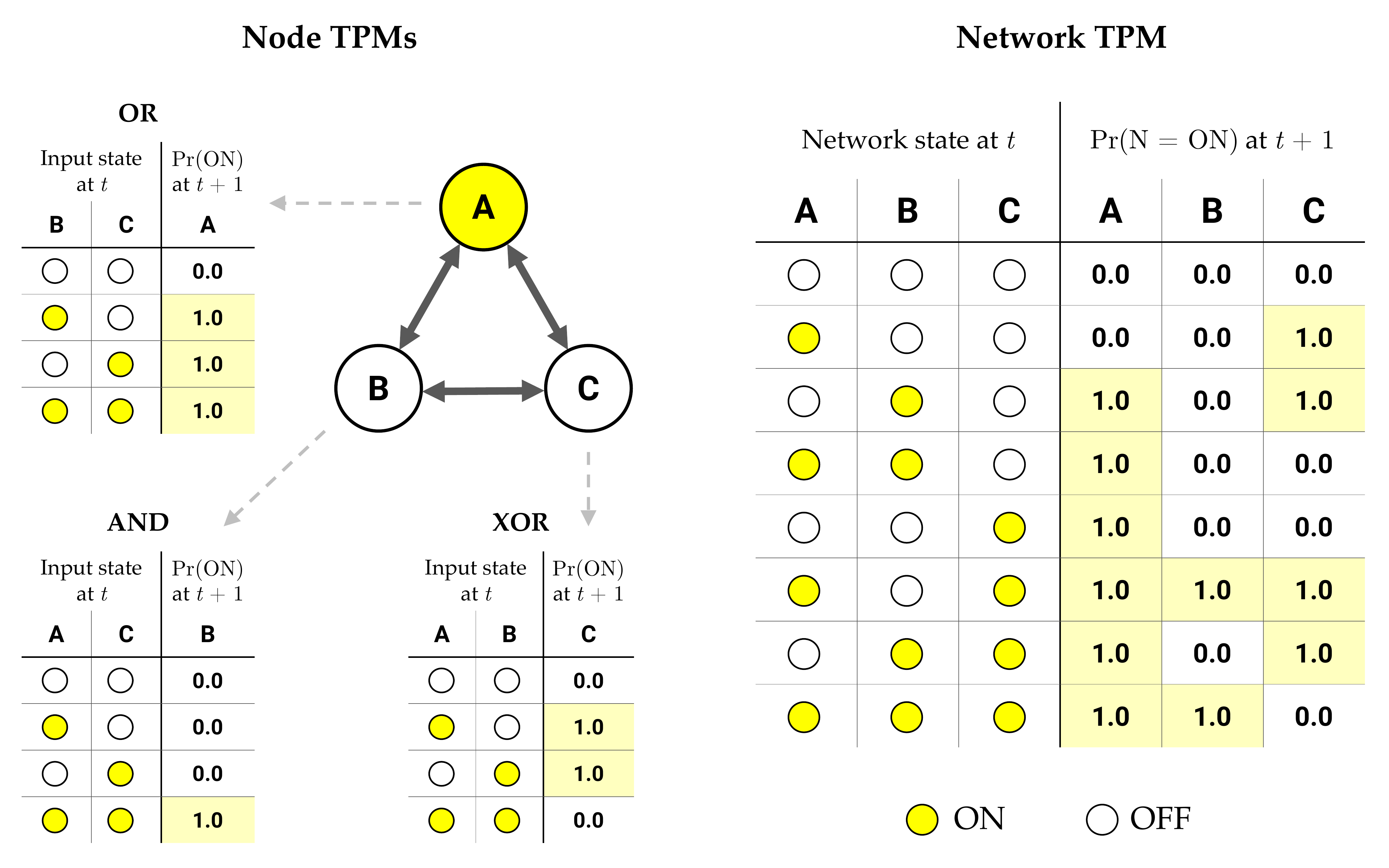}
\end{adjustwidth}
\caption{%
\textbf{A network of nodes and its \tpm.} Each node has its own \tpm---in this case, the truth-table of a deterministic logic gate. Yellow signifies the ``ON'' state; white signifies ``OFF''. The system's \tpm (right) is composed of the {\tpm}s of its nodes (left), here shown in state-by-node form (see \sectionref{sec:tpm}). Note that in PyPhi's \tpm representation, the first node's state varies the fastest, according to the little-endian convention (see \sectionref{sec:2d-sbn-form}).}\label{fig:tpm}
\end{figure}
}

Such a discrete dynamical system is completely specified by its transition probability matrix (\tpm), which contains the probabilities of all state transitions from \(t\) to \(t+1\). It can be obtained from the graphical representation of the system by perturbing the system into each of its possible states and observing the following state at the next timestep (for stochastic systems, repeated trials of perturbation/observation will yield the probabilities of each state transition). In PyPhi, the \tpm is the fundamental representation of the system.

Formally, if we let \(S_t\) be the random variable of the system state at \(t\), the \tpm specifies the conditional probability distribution over the next state \(S_{t+1}\) given each current state \(s_t\): 
\[ 
  \Pr(S_{t+1} \mid S_t = s_t), \quad \forall \; s_t \in \Omega_{S}, 
\] 
where \(\Omega_S\) denotes the set of possible states. Furthermore, given a marginal distribution over the previous states of the system, the \tpm fully specifies the joint distribution over state transitions. Here \iit imposes uniformity on the marginal distribution of the previous states because the aim of the analysis is to capture direct causal relationships across a single timestep without confounding factors, such as influences from system states before \(t - 1\)~\cite{albantakis2017actual,ay2008information,balduzzi2008integrated,hoel2013quantifying,oizumi2014phenomenology}. The marginal distribution thus corresponds to an interventional (causal), not observed, state distribution.

Moreover, \iit assumes that there is no instantaneous causation; that is, it is assumed that the elements of the dynamical system influence one another only from one timestep to the next. Therefore we require that the system satisfies the following Markov condition, called the \emph{conditional independence property}: each element's state at \(t+1\) must be independent of the state of the others, given the state of the system at \(t\)~\cite{pearl2009causality},
\begin{equation}
  \Pr(S_{t+1} \mid S_t = s_t) \;= \prod_{N \,\in\, S} \Pr(N_{t+1} \mid S_t = s_t)
    \;, \quad \forall \; s_t \in S.\label{eq:conditional-independence}
\end{equation}
For systems of binary elements, a \tpm that satisfies \cref{eq:conditional-independence} can be represented in state-by-node form (\cref{fig:tpm}, right), since we need only store each element's marginal distribution rather than the full joint distribution.

In PyPhi, the system under analysis is represented by a \texttt{Network} object. A \texttt{Network} is created by passing its \tpm as the first argument: \texttt{network\ =\ pyphi.Network(tpm)} (see \sectionref{sec:demo-setup}). Optionally, a connectivity matrix (\cm) can also be provided, where 
\[
  {\left[\cmmath\right]}_{i,j} =
    \begin{cases}
      1 &\text{if there is an edge from element }i\text{ to element }j \\
      0 &\otherwise,
    \end{cases}
\]
via the \texttt{cm} keyword argument: \texttt{network\ =\ pyphi.Network(tpm,\ cm=cm)}. Because the \tpm completely specifies the system, providing a \cm is not necessary; however, explicit connectivity information can be used to make computations more efficient, especially for sparse networks, because PyPhi can rule out certain causal influences \emph{a priori} if there are missing connections (see \sectionref{sec:connectivity}). Note that this means providing an incorrect \cm can result in inaccurate output. If no \cm is given, PyPhi assumes full connectivity; \ie, it assumes each element may have an effect on any other, which guarantees correct results.

Once the \texttt{Network} is created, a subset of elements within the system (called a \emph{candidate system}), together with a particular system state, can be selected for analysis by creating a \texttt{Subsystem} object. Hereafter we refer to a candidate system as a \emph{subsystem}.

\hypertarget{sec:demonstration}{%
\subsection{Demonstration}\label{sec:demonstration}
}

The mathematical framework of \iit is typically formulated using graphical causal models as representations of physical systems of elements. The framework builds on the causal calculus of the \(\textit{do}(\,\cdot\,)\) operator introduced by Pearl~\cite{pearl2009causality}. In order to assess causal relationships among the elements, interventions (manipulations, perturbations) are used to actively set elements into a specific state, after which the resulting state transition is observed.

For reference, we define a set of graphical operations that are used during the \iit analysis. To \emph{fix} an element is to use system interventions to keep it in the same state for every observation. To \emph{noise} an element is to use system interventions to set it into a state chosen uniformly at random. Finally, to \emph{cut} a connection from a source element to a target element is to make the source appear noised to the target, while the remaining, uncut connections from the source still correctly transmit its state.

In this section we demonstrate some of the capabilities of the software by unfolding the \ces of a small deterministic system of logic gates as described in~\cite{oizumi2014phenomenology} while describing how the algorithm is implemented in terms of \tpm manipulations, which we link to the graphical operations defined above. A schematic of the algorithm is shown in \cref{fig:algorithm-mechanism} and \cref{fig:algorithm-system}, and a more detailed illustration is given in \siref{sup:presentation}.

\hypertarget{fig:algorithm-mechanism}{%
\begin{figure}[!h]
\centering
\begin{adjustwidth}{-2in}{0in}
\includegraphics[width=7.25in]{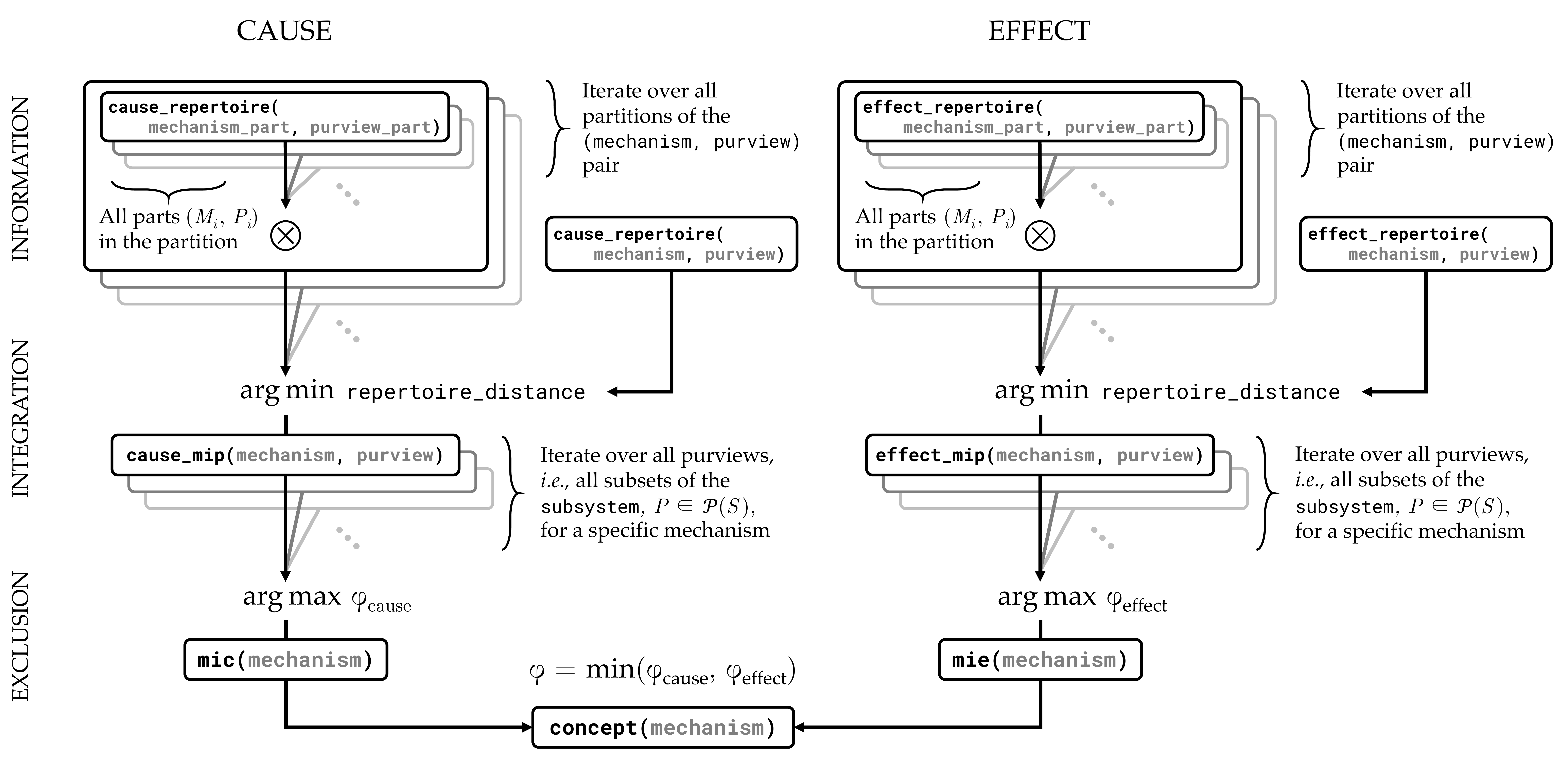}
\end{adjustwidth}
\caption{\textbf{Algorithm schematic at the mechanism level.} PyPhi functions are named in boxes, with arguments in grey. Arrows point from callee to caller. Functions are organized by the postulate they correspond to (left). \(\otimes\) denotes the tensor product; \(\mathcal{P}\) denotes the power set.}\label{fig:algorithm-mechanism}
\end{figure} 
}
\hypertarget{fig:algorithm-system}{%
\begin{figure}[!h]
\centering
\includegraphics[width=5.5in]{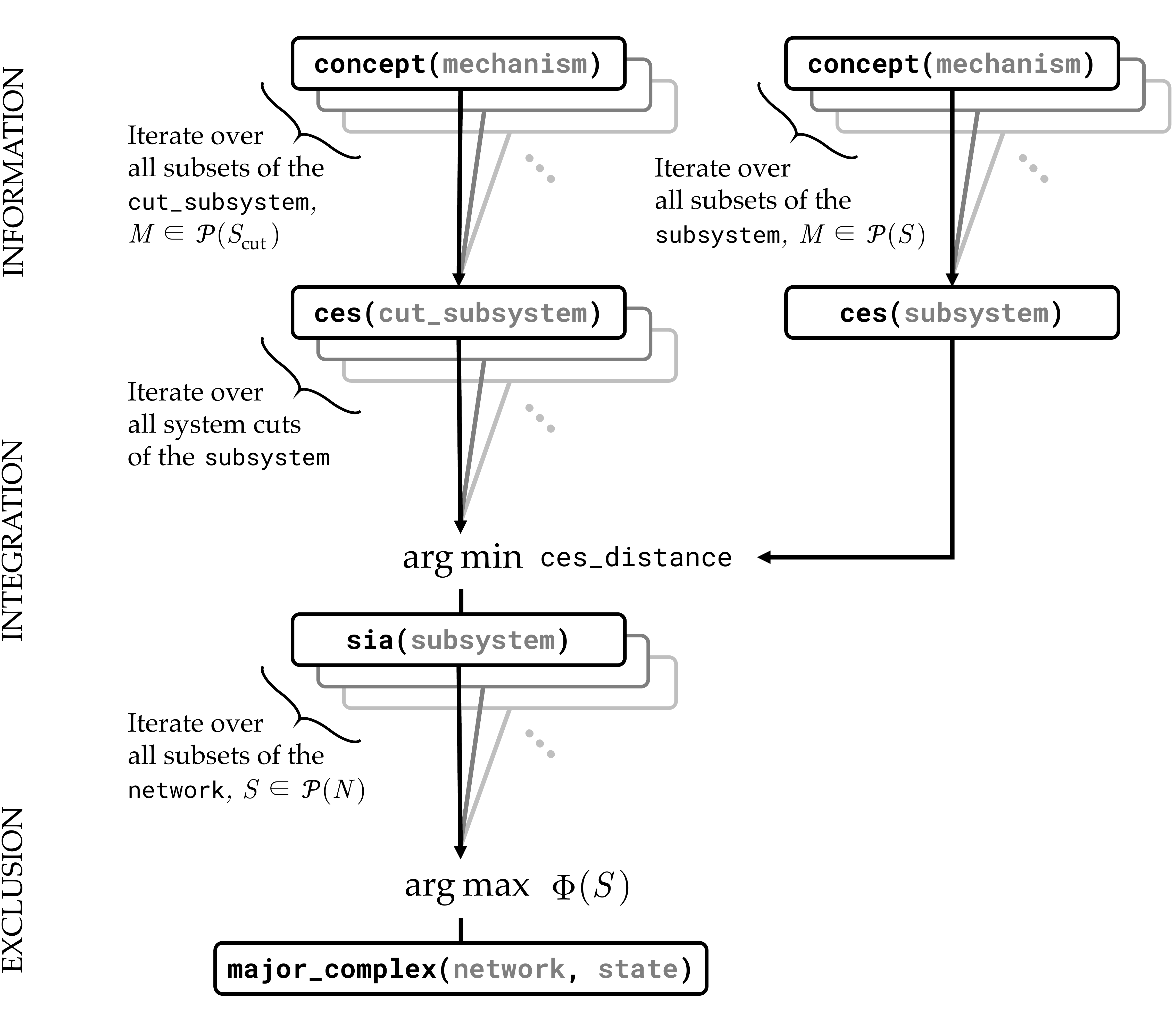}
\caption{\textbf{Algorithm schematic at the system level.} PyPhi functions are named in boxes, with arguments in grey. Arrows point from callee to caller. Functions are organized by the postulate they correspond to (left). \(\mathcal{P}\) denotes the power set.}\label{fig:algorithm-system}
\end{figure}
}

\hypertarget{sec:demo-setup}{%
\subsubsection{Setup}\label{sec:demo-setup}
}

The first step is to create the \texttt{Network} object. Here we choose to provide the \tpm in \(2\)-dimensional state-by-node form (see \sectionref{sec:2d-sbn-form}). The \tpm is the only required argument, but we provide the \cm as well, since we know that there are no self-loops in the system and PyPhi will use this information to speed up the computation. We also label the nodes \(A\), \(B\), and \(C\) to make the output easier to read.

\begin{samepage}
\begin{Highlighting}
  >>> import pyphi
  >>> import numpy as np
  >>> tpm = np.array([
  ...     [0.0, 0.0, 0.0],
  ...     [0.0, 0.0, 1.0],
  ...     [1.0, 0.0, 1.0],
  ...     [1.0, 0.0, 0.0],
  ...     [1.0, 0.0, 0.0],
  ...     [1.0, 1.0, 1.0],
  ...     [1.0, 0.0, 1.0],
  ...     [1.0, 1.0, 0.0]
  ... ])
  >>> cm = np.array([
  ...     [0, 1, 1],
  ...     [1, 0, 1],
  ...     [1, 1, 0],
  ... ])
  >>> network = pyphi.Network(tpm, cm=cm, node_labels=[{\textquotesingle}A{\textquotesingle}, {\textquotesingle}B{\textquotesingle}, {\textquotesingle}C{\textquotesingle}])
\end{Highlighting}
\end{samepage}

We select a subsystem and a system state for analysis by creating a \texttt{Subsystem} object. System states are represented by tuples of \texttt{1}s and \texttt{0}s, with \texttt{1} meaning ``ON'' and \texttt{0} meaning ``OFF.'' In this case we will analyze the entire system, so the subsystem will contain all three nodes. The nodes to include can be specified with either their labels or their indices (note that in other PyPhi functions, nodes must be specified with their indices).

\begin{samepage}
\begin{Highlighting}
  >>> state = (1, 0, 0)
  >>> nodes = ({\textquotesingle}A{\textquotesingle}, {\textquotesingle}B{\textquotesingle}, {\textquotesingle}C{\textquotesingle})
  >>> subsystem = pyphi.Subsystem(network, state, nodes)
\end{Highlighting}
\end{samepage}

If there are nodes outside the subsystem, they are considered as \emph{background conditions} for the causal analysis~\cite{oizumi2014phenomenology}. In the graphical representation of the system, the background conditions are \emph{fixed} in their current state while the subsystem is perturbed and observed in order to derive its \tpm. In the \tpm representation, the equivalent operation is performed by \emph{conditioning} the system \tpm on the state at \(t\) of the nodes outside the subsystem and then \emph{marginalizing out} those nodes at \(t+1\) (illustrated in \siref{sup:presentation}). In PyPhi, this is done when the subsystem is created; the subsystem \tpm can be accessed with the \texttt{tpm} attribute, \eg \texttt{subsystem.tpm}.

\hypertarget{sec:demo-repertoire}{%
\subsubsection{Cause/effect repertoires (mechanism-level information)}\label{sec:demo-repertoire}
}

The lowest-level objects in the \ces of a system are the \emph{cause repertoire} and \emph{effect repertoire} of a set of nodes within the subsystem, called a \emph{mechanism}, over another set of nodes within the subsystem, called a \emph{purview} of the mechanism. The cause (effect) repertoire is a probability distribution that captures the information specified by the mechanism about the purview by describing how the previous (next) state of the purview is constrained by the current state of the mechanism.

In terms of graphical operations, the effect repertoire is obtained by 
\begin{enumerate*}[label={(\arabic*)}]
  \item \emph{fixing} the mechanism nodes in their state at \(t\); 
  \item \emph{noising} the non-mechanism nodes at time \(t\), so as to remove their causal influence on the purview; and 
  \item observing the resulting state transition from \(t\) to \(t+1\) while ignoring the state at \(t+1\) of non-purview nodes, in order to derive a distribution over purview states at \(t+1\).
\end{enumerate*}

The cause repertoire is obtained similarly, but in that case, the purview nodes at time \(t-1\) are noised, and the resulting state transition from \(t-1\) to \(t\) is observed while ignoring the state of non-mechanism nodes. Bayes' rule is then applied, resulting in a distribution over purview states at \(t-1\). The corresponding operations on the \tpm are detailed in \sectionref{sec:repertoires} and illustrated in \siref{sup:presentation}.

Note that, operationally, we enforce that each input from a noised node conveys \emph{independent} noise during the perturbation/observation step. In this way, we avoid counting correlations from outside the mechanism-purview pair as constraints due to the current state of the mechanism. Graphically, this process would correspond to replacing each noised node that is a parent of multiple purview nodes (for the effect repertoire) or mechanism nodes (for the cause repertoire) with multiple, independent ``virtual nodes'' (as in~\cite[Supplementary Methods]{oizumi2014phenomenology}). However, the equivalent definition of repertoires in \cref{eq:multinode-effect-repertoire,eq:multinode-cause-repertoire} obviates the need to actually implement virtual nodes in PyPhi.

With the \texttt{cause\_repertoire()} method of the \texttt{Subsystem}, we can obtain the cause repertoire of, for example, mechanism \(A\) over the purview \(ABC\) depicted in Fig.~4 of~\cite{oizumi2014phenomenology}:

\begin{samepage}
\begin{Highlighting}
  >>> A, B, C = subsystem.node_indices
  >>> print(subsystem.state)
  (1, 0, 0)
  >>> mechanism = (A,)
  >>> purview = (A, B, C)
  >>> cr = subsystem.cause_repertoire(mechanism, purview)
  >>> print(cr)
  [[[ 0.          0.16666667]
    [ 0.16666667  0.16666667]]
  
   [[ 0.          0.16666667]
    [ 0.16666667  0.16666667]]]
\end{Highlighting}
\end{samepage}

We see that mechanism \(A\) in its current state, ON (\texttt{1}), specifies information by ruling out the previous states in which \(B\) and \(C\) are OFF (\texttt{0}). That is, the probability that either \texttt{(0, 0, 0)} or \texttt{(1, 0, 0)} was the previous state, given that \(A\) is currently ON, is zero:

\begin{samepage}
\begin{Highlighting}
  >>> print(cr[(0, 0, 0)])
  0.0
  >>> print(cr[(1, 0, 0)])
  0.0
\end{Highlighting}
\end{samepage}

Note that repertoires are returned in multidimensional form, so they can be indexed with state tuples as above. Repertoires can be reshaped to be 1-dimensional if needed, \eg for plotting, but care must be taken that NumPy's FORTRAN (column-major) ordering is used so that PyPhi's little-endian convention for indexing states is respected (see \sectionref{sec:2d-sbn-form}). PyPhi provides the \texttt{pyphi.distribution.flatten()} function for this:

\begin{samepage}
\begin{Highlighting}
  >>> flat_cr = pyphi.distribution.flatten(cr)
  >>> print(flat_cr)
  [ 0.          0.          0.16666667  0.16666667  0.16666667  0.16666667
    0.16666667  0.16666667]
\end{Highlighting}
\end{samepage}

\hypertarget{sec:demo-mip}{%
\subsubsection{Minimum-information partitions (mechanism-level integration)}\label{sec:demo-mip}
}

Having assessed the information of a mechanism over a purview, the next step is to assess its \emph{integrated information} (denoted \(\varphi\)) by quantifying the extent to which the cause and effect repertoires of the mechanism-purview pair can be reduced to the repertoires of its parts.

In terms of graphical operations, the irreducibility of a mechanism-purview pair is tested by partitioning it into parts and \emph{cutting} the connections between them. By applying the perturbation/observation procedure after cutting the connections we obtain a \emph{partitioned repertoire}. Since the partition renders the parts independent, in terms of \tpm manipulations, the partitioned repertoire can be calculated as the product of the individual repertoires for each of the parts. If the partitioned repertoire is no different than the original unpartitioned repertoire, then the mechanism as a whole did not specify integrated information about the purview. By contrast, if a repertoire cannot be factored in this way, then some of its selectivity is due to the causal influence of the mechanism \emph{as an integrated whole} on the purview, and the repertoire is said to be \emph{irreducible.}

The amount of irreducibility of a mechanism over a purview with respect to a partition is quantified as the distance between the unpartitioned repertoire and the partitioned repertoire (calculated with \texttt{pyphi.distance.repertoire\_distance()}). There are many ways to divide the mechanism and purview into parts, so the irreducibility is measured for every partition and the partition that yields the minimum irreducibility is called the \emph{minimum-information partition} (\mip). The integrated information (\(\varphi\)) of a mechanism-purview pair is the distance between the unpartitioned repertoire and the partitioned repertoire associated with the \mip. PyPhi supports several distance measures and partitioning schemes (see \sectionref{sec:config}).

The \mip search procedure is implemented by the \texttt{Subsystem.cause\_mip()} and \texttt{Subsystem.effect\_mip()} methods. Each returns a \pyphiria object that contains the \mip, as well as the \(\varphi\) value, mechanism, purview, temporal direction (cause or effect), unpartitioned repertoire, and partitioned repertoire. For example, we compute the effect \mip of mechanism \(ABC\) over purview \(ABC\) from Fig.~6 of~\cite{oizumi2014phenomenology} as follows:

\begin{samepage}
\begin{Highlighting}
  >>> mechanism = (A, B, C)
  >>> purview = (A, B, C)
  >>> mip = subsystem.effect_mip(mechanism, purview)
  >>> print(mip)
  Repertoire irreducibility analysis
    \(\varphi\) = 1/4
    Mechanism: [A, B, C]
    Purview = [A, B, C]
    Direction: EFFECT
    Partition:
       \(\varnothing\)    A,B,C
      --- × -----
       B     A,C 
    Repertoire:
      +--------------+
      |  S     P(S)  |
      | ------------ |
      | 000    0     |
      | 100    0     |
      | 010    0     |
      | 110    0     |
      | 001    1     |
      | 101    0     |
      | 011    0     |
      | 111    0     |
      +--------------+
    Partitioned repertoire:
      +--------------+
      |  S     P(S)  |
      | ------------ |
      | 000    0     |
      | 100    0     |
      | 010    0     |
      | 110    0     |
      | 001    3/4   |
      | 101    0     |
      | 011    1/4   |
      | 111    0     |
      +--------------+
\end{Highlighting}
\end{samepage}

Here we can see that the \mip attempts to factor the repertoire of \(ABC\) over \(ABC\) into the product of the repertoire of \(ABC\) over \(AC\) and the repertoire of the empty mechanism \(\varnothing\) over \(B\). However, the repertoire cannot be factored in this way without information loss; the distance between the unpartitioned and partitioned repertoire is nonzero (\(\varphi = \frac{1}{4}\)). Thus mechanism \(ABC\) over the purview \(ABC\) is irreducible.

\hypertarget{sec:demo-mice}{%
\subsubsection{Maximally-irreducible cause-effect repertoires (mechanism-level exclusion)}\label{sec:demo-mice}
}

Next, we apply {\iit}'s postulate of exclusion at the mechanism level by finding the \emph{maximally-irreducible cause} (\mic) and \emph{maximally irreducible effect} (\mie) specified by a mechanism. This is done by searching over all possible purviews for the \pyphiria object with the maximal \(\varphi\) value. The \texttt{Subsystem.mic()} and \texttt{Subsystem.mie()} methods implement this search procedure; they return a \texttt{MaximallyIrreducibleCause} and a \texttt{MaximallyIrreducibleEffect} object, respectively. The \mic of mechanism \(BC\), for example, is the purview \(AB\) (Fig.~8 of~\cite{oizumi2014phenomenology}). This is computed like so:

\begin{samepage}
\begin{Highlighting}
  >>> mechanism = (B, C)
  >>> mic = subsystem.mic(mechanism)
  >>> print(mic)
  Maximally-irreducible cause
    \(\varphi\) = 1/3
    Mechanism: [B, C]
    Purview = [A, B]
    Direction: CAUSE
    MIP:
       B     C 
      --- × ---
       \(\varnothing\)    A,B
    Repertoire:
      +-------------+
      | S     P(S)  |
      | ----------- |
      | 00    2/3   |
      | 10    0     |
      | 01    0     |
      | 11    1/3   |
      +-------------+
    Partitioned repertoire:
      +-------------+
      | S     P(S)  |
      | ----------- |
      | 00    1/2   |
      | 10    0     |
      | 01    0     |
      | 11    1/2   |
      +-------------+
\end{Highlighting}
\end{samepage}

\hypertarget{sec:demo-concepts}{%
\subsubsection{Concepts}\label{sec:demo-concepts}
}

If the mechanism's \mic has \(\varphi_{\text{cause}} > 0\) and its \mie has \(\varphi_{\text{effect}} > 0\), then the mechanism is said to specify a \emph{concept}. The \(\varphi\) value of the concept as a whole is the minimum of \(\varphi_{\text{cause}}\) and \(\varphi_{\text{effect}}\).

We can compute the concept depicted in Fig.~9 of~\cite{oizumi2014phenomenology} using the \texttt{Subsystem.concept()} method, which takes a \texttt{mechanism} and returns a \pyphiconcept object containing the \(\varphi\) value, the \mic (in the `\texttt{cause}' attribute), and the \mie (in the `\texttt{effect}' attribute):

\begin{samepage}
\begin{Highlighting}
  >>> mechanism = (A,)
  >>> concept = subsystem.concept(mechanism)
  >>> print(concept)
  ~~~~~~~~~~~~~~~~~~~~~~~~~~~~~~~~~~~~~~~~~~~~~~~~~~~~~~~~
             Concept: Mechanism = [A], \(\varphi\) = 1/6            
  ~~~~~~~~~~~~~~~~~~~~~~~~~~~~~~~~~~~~~~~~~~~~~~~~~~~~~~~~
              MIC                         MIE             
  +--------------------------++--------------------------+
  |  \(\varphi\) = 1/6                 ||  \(\varphi\) = 1/4                |
  |  Purview = [B, C]        ||  Purview = [B]           |
  |  MIP:                    ||  MIP:                    |
  |     \(\varnothing\)     A              ||    \(\varnothing\)     A              |
  |    --- × ---             ||    --- × ---             |
  |     B     C              ||     B     \(\varnothing\)              |
  |  Repertoire:             ||  Repertoire:             |
  |    +-------------+       ||    +------------+        |
  |    | S     P(S)  |       ||    | S    P(S)  |        |
  |    | ----------- |       ||    | ---------- |        |
  |    | 00    0     |       ||    | 0    1/2   |        |
  |    | 10    1/3   |       ||    | 1    1/2   |        |
  |    | 01    1/3   |       ||    +------------+        |
  |    | 11    1/3   |       ||  Partitioned repertoire: |
  |    +-------------+       ||    +------------+        |
  |  Partitioned repertoire: ||    | S    P(S)  |        |
  |    +-------------+       ||    | ---------- |        |
  |    | S     P(S)  |       ||    | 0    3/4   |        |
  |    | ----------- |       ||    | 1    1/4   |        |
  |    | 00    1/6   |       ||    +------------+        |
  |    | 10    1/6   |       |+--------------------------+
  |    | 01    1/3   |       |
  |    | 11    1/3   |       |
  |    +-------------+       |
  +--------------------------+
\end{Highlighting}
\end{samepage}

Note that in PyPhi, the repertoires are distributions over purview states, rather than system states. Occasionally it is more convenient to represent repertoires as distributions over the entire system. This can be done with the \texttt{expand\_cause\_repertoire()} and \texttt{expand\_effect\_repertoire()} methods of the \pyphiconcept object, which assume the unconstrained (maximum-entropy) distribution over the states of non-purview nodes:

\begin{samepage}
\begin{Highlighting}
  >>> full_cr = concept.expand_cause_repertoire()
  >>> print(pyphi.distribution.flatten(full_cr))
  [ 0.          0.          0.16666667  0.16666667  0.16666667  0.16666667
    0.16666667  0.16666667]
  >>> full_er = concept.expand_effect_repertoire()
  >>> print(pyphi.distribution.flatten(full_er))
  [ 0.0625  0.1875  0.0625  0.1875  0.0625  0.1875  0.0625  0.1875]
\end{Highlighting}
\end{samepage}

Also note that \texttt{Subsystem.concept()} will return a \pyphiconcept object when \(\varphi = 0\) even though these are not concepts, strictly speaking. For convenience, \texttt{bool(concept)} evaluates to \texttt{True} if \(\varphi > 0\) and \texttt{False} otherwise.

\hypertarget{sec:demo-ces}{%
\subsubsection{Cause-effect structures (system-level information)}\label{sec:demo-ces}
}

The next step is to compute the \ces, the set of all concepts specified by the subsystem. The \ces characterizes all of the causal constraints that are intrinsic to a physical system. This is implemented by the \texttt{pyphi.compute.ces()} function, which simply calls \texttt{Subsystem.concept()} for every mechanism \(M \in \mathcal{P}(S)\), where \(\mathcal{P}(S)\) is the power set of subsystem nodes. It returns a \pyphices object containing those \pyphiconcepts for which \(\varphi > 0\).

We see that every mechanism in \(\mathcal{P}(S)\) except for \(AC\) specifies a concept, as described in Fig.~10 of~\cite{oizumi2014phenomenology}: 

\begin{samepage}
\begin{Highlighting}
  >>> ces = pyphi.compute.ces(subsystem)
  >>> print(ces.labeled_mechanisms)
  ([{\textquotesingle}A{\textquotesingle}], [{\textquotesingle}B{\textquotesingle}], [{\textquotesingle}C{\textquotesingle}], [{\textquotesingle}A{\textquotesingle}, {\textquotesingle}B{\textquotesingle}], [{\textquotesingle}B{\textquotesingle}, {\textquotesingle}C{\textquotesingle}], [{\textquotesingle}A{\textquotesingle}, {\textquotesingle}B{\textquotesingle}, {\textquotesingle}C{\textquotesingle}])
\end{Highlighting}
\end{samepage}

\hypertarget{sec:demo-ices}{%
\subsubsection{Irreducible cause-effect structures (system-level integration)}\label{sec:demo-ices}
}

At this point, the irreducibility of the subsystem's \ces is evaluated by applying the integration postulate at the system level. As with integration at the mechanism level, the idea is to measure the difference made by each partition and then take the minimal value as the irreducibility of the subsystem.

We begin by performing a \emph{system cut}. Graphically, the subsystem is partitioned into two parts and the edges going from one part to the other are \emph{cut}, rendering them causally ineffective. This is implemented as an operation on the \tpm as follows: Let \(E_{\cut}\) denote the set of directed edges in the subsystem that are to be cut, where each edge \(e \in E_{\cut}\) has a source node \(a\) and a target node \(b\). For each edge, we modify the individual \tpm of node \(b\) (\cref{fig:tpm}) by marginalizing over the states of \(a\) at \(t\). The resulting \tpm specifies the function implemented by \(b\) with the causal influence of \(a\) removed. We then combine the modified node {\tpm}s to recover the full \tpm of the partitioned subsystem. Finally, we recalculate the \ces of the subsystem with this modified \tpm (the \emph{partitioned CES}).

The irreducibility of a \ces with respect to a partition is the distance between the unpartitioned and partitioned {\ces}s (calculated with \texttt{pyphi.compute.ces\_distance()}; several distances are supported; see \sectionref{sec:config}). This distance is evaluated for every partition, and the minimum value across all partitions is the subsystem's integrated information \(\Phi\), which measures the extent to which the \ces specified by the subsystem is irreducible to the \ces under the minimal partition.

This procedure is implemented by the \computesia function, which returns a \pyphisia object (\cref{fig:object-hierarchy}). We can verify that the \(\Phi\) value of the example system in~\cite{oizumi2014phenomenology} is \(1.92\) and the minimal partition is that which removes the causal connections from \(AB\) to \(C\):

\begin{samepage}
\begin{Highlighting}
  >>> sia = pyphi.compute.sia(subsystem)
  >>> print(sia.phi)
  1.916665
  >>> print(sia.cut)
  Cut [0, 1] --/ /--> [2]
\end{Highlighting}
\end{samepage}

\hypertarget{sec:demo-complexes}{%
\subsubsection{Complexes (system-level exclusion)}\label{sec:demo-complexes}}

The final step in unfolding the \ces of the system is to apply the postulate of exclusion at the system level. We compute the {\ces} of each subset of the network, considered as a subsystem (that is, \emph{fixing} the external nodes as background conditions), and find the \ces with maximal \(\Phi\), called the \emph{maximally-irreducible cause-effect structure} (\mics) of the system. The subsystem giving rise to it is called the \emph{major complex}; any overlapping subsets with lower \(\Phi\) are excluded. Non-overlapping subsets may be further analyzed to find additional complexes within the system.

In this example, we find that the whole system \(ABC\) is the system's major complex, and all proper subsets are excluded: 

\begin{samepage}
\begin{Highlighting}
  >>> state = (1, 0, 0)
  >>> major_complex = pyphi.compute.major_complex(network, state)
  >>> print(major_complex.subsystem)
  Subsystem(A, B, C)
\end{Highlighting}
\end{samepage}

Note that since \computemajorcomplex is a function of the \texttt{Network}, rather than a particular \texttt{Subsystem}, it is necessary to specify the state in which the system should be analyzed.

\hypertarget{sec:design}{%
\section{Design and implementation}\label{sec:design}
}

PyPhi was designed to be easy to use in interactive, exploratory research settings while nonetheless remaining suitable for use in large-scale simulations or as a component in larger applications. It was also designed to be efficient, given the high computational complexity of the algorithms in \iit. Here we describe some implementation details and optimizations used in the software.

\hypertarget{sec:tpm}{%
\subsection{Representation of the \tpm and probability distributions}\label{sec:tpm}
}

PyPhi supports three different \tpm representations: \(2\)-\emph{dimensional state-by-node}, \emph{multidimensional state-by-node}, and \emph{state-by-state}. The state-by-node form is the canonical representation in PyPhi, with the 2-dimensional form used for input and visualization and the multidimensional form used for internal computation. The state-by-state representation is given as an input option for those accustomed to this more general form. If the \tpm is given in state-by-state form, PyPhi will raise an error if it does not satisfy \cref{eq:conditional-independence} (conditional independence).

\hypertarget{sec:2d-sbn-form}{%
\subsubsection{2-dimensional state-by-node form}\label{sec:2d-sbn-form}
}

A \tpm in state-by-node form is a matrix where the entry \((i,\,j)\) gives the probability that the \(j^{\mathrm{th}}\) node will be ON at \(t+1\) if the system is in the \(i^{\mathrm{th}}\) state at \(t\). This representation has the advantage of being more compact than the state-by-state form, with \(2^n \by n\) entries instead of \(2^n \by 2^n\), where \(n\) is the number of nodes. Note that the \tpm admits this representation because in PyPhi the nodes are binary; both \(\Pr(N_{t+1} = \mathrm{ON})\) and \(\Pr(N_{t+1} = \mathrm{OFF})\) can be specified by a single entry, in our case \(\Pr(N_{t+1} = \mathrm{ON})\), since the two probabilities must sum to \(1\).

Because the possible system states at \(t\) are represented implicitly as row indices in 2-dimensional {\tpm}s, there must be an implicit mapping from states to indices. In PyPhi this mapping is achieved by listing the state tuples in lexicographical order and then interpreting them as binary numbers where the state of the first node corresponds to the least-significant bit, so that \eg the state \texttt{(0, 0, 0, 1)} is mapped to the row with index \(8\) (the ninth row, since Python uses zero-based indexing~\cite{dijkstra1982numbering}). Designating the first node's state as the least-significant bit is analogous to choosing the little-endian convention in organizing computer memory. This convention is preferable because the mapping is stable under the inclusion of new nodes: including another node in a subsystem only requires concatenating new rows and a new column to its \tpm rather than interleaving them. Note that this is opposite convention to that used in writing numbers in positional notation; care must be taken when converting between states and indices and between different \tpm representations (the \texttt{pyphi.convert} module provides convenience functions for these purposes).

\hypertarget{sec:nd-form}{%
\subsubsection{Multidimensional state-by-node form}\label{sec:nd-form}
}

When a state-by-state \tpm is provided to PyPhi by the user, it is converted to state-by-node form and the conditional independence property (\cref{eq:conditional-independence}) is checked. Note that any \tpm in state-by-node form necessarily satisfies \cref{eq:conditional-independence}. For internal computations, the \tpm is then reshaped so that it has \(n+1\) dimensions rather than two: the first \(n\) dimensions correspond to the states of each of the \(n\) nodes at \(t\), while the last dimension corresponds to the probabilities of each node being ON at \(t+1\). In other words, the indices of the rows (current states) in the 2-dimensional \tpm are ``unraveled'' into \(n\) dimensions, with the \(i^\mathrm{th}\) dimension indexed by the \(i^\mathrm{th}\) bit of the 2-dimensional row index according to the little-endian convention. Because the \tpm is stored in a NumPy array, this multidimensional form allows us to take advantage of NumPy indexing~\cite{walt2011numpy} and use a state tuple as an index directly, without converting it to an integer index:

\begin{samepage}
\begin{Highlighting}
  >>> state = (0, 1, 1)
  >>> print(tpm[state])
  [ 1.    0.25  0.75]
\end{Highlighting}
\end{samepage}

The first entry of this array signifies that if the state of the system is \texttt{(0, 1, 1)} at \(t\), then the probability of the first node \(N_0\) being ON at \(t+1\) is \(\Pr(N_{0,\,t+1} = \mathrm{ON}) = 1\). Similarly, the second entry means \(\Pr(N_{1,\,t+1} = \mathrm{ON}) = 0.25\) and the third entry means \(\Pr(N_{2,\,t+1} = \mathrm{ON}) = 0.75\). 

Most importantly, the multidimensional representation simplifies the calculation of marginal and conditional distributions and cause/effect repertoires, because it allows efficient ``broadcasting''~\cite{walt2011numpy} of probabilities when multiplying distributions. Specifically, the Python multiplication operator `\texttt{*}' acts as the tensor product when the operands are NumPy arrays \texttt{A} and \texttt{B} of equal dimensionality such that for each dimension \texttt{d}, either \texttt{A.shape[d] == 1} or \texttt{B.shape[d] == 1}.

\hypertarget{sec:repertoires}{%
\subsection{Calculation of cause/effect repertoires from the \tpm}\label{sec:repertoires}
}

The cause and effect repertoires of a mechanism over a purview describe how the mechanism nodes in a particular state at \(t\) constrain the possible states of the purview nodes at \(t-1\) and \(t+1\), respectively. Here we describe how they are derived from the \tpm in PyPhi.

\subsubsection{The effect repertoire}

We begin with the simplest case: calculating the effect repertoire of a mechanism \(M \subseteq S\) over a purview consisting of a single element \(P_i \in S\). This is defined as a conditional probability distribution over states of the purview element at \(t+1\) given the current state of the mechanism,
\begin{equation}
  \effectrepertoire(M,\,P_i) \;\coloneqq\; \Pr(P_{i,t+1} \mid M_t = m_t).
\label{eq:effect-repertoire}
\end{equation}
It is derived from the \tpm by conditioning on the state of the mechanism elements, marginalizing over the states of non-purview elements \(P' = S \setminus P_i\) (these states correspond to columns in the state-by-state \tpm), and marginalizing over the states of non-mechanism elements \(M' = S \setminus M\) (these correspond to rows):
\[
\begin{multlined}
  \Pr(P_{i,t+1} \mid M_t = m_t) \;= \\[6pt]
    \dfrac{1}{\lvert \Omega_{M'} \rvert} \;
    \sum_{m'_t\,\in\,\Omega_{M'}} \; 
    \dfrac{1}{\lvert \Omega_{P'} \rvert}\; 
    \sum_{p'_{t+1}\,\in\,\Omega_{P'}} \;
    \Pr(P_{i,t+1},\, p'_{t+1} \mid M = m_t,\, M' = m'_t).
\end{multlined}
\]

This operation is implemented in PyPhi by several subroutines. First, in a pre-processing step performed when the \texttt{Subsystem} object is created, a \texttt{Node} object is created for each element in the subsystem. Each \texttt{Node} contains its own individual \tpm, extracted from the subsystem's \tpm; this is a \(2^s \by 2\) matrix where \(s\) is the number of the node's parents and the entry \((i, j)\) gives the probability that the node is in state \(j\) (\texttt{0} or \texttt{1}) at \(t + 1\) given that its parents are in state \(i\) at \(t\). This node TPM is represented internally in multidimensional state-by-node form as usual, with singleton dimensions for those subsystem elements that are not parents of the node. The effect repertoire is then calculated by conditioning the purview node's \tpm on the state of the mechanism nodes that are also parents of the purview node, via the \texttt{pyphi.tpm.condition\_tpm()} function, and marginalizing out non-mechanism nodes, with \texttt{pyphi.tpm.marginalize\_out()}.

In cases where there are mechanism nodes that are not parents of the purview node, the resulting array is multiplied by an array of ones that has the desired shape (dimensions of size two for each mechanism node, and singleton dimensions for each non-mechanism node). Because of NumPy's broadcasting feature, this step is equivalent to taking the tensor product of the array with the maximum-entropy distribution over mechanism nodes that are not parents, so that the final result is a distribution over all mechanism nodes, as desired.

The effect repertoire over a purview of more than one element is given by the tensor product of the effect repertories over each individual purview element,
\begin{equation}
  \effectrepertoire(M,\,P) \;\coloneqq\; 
    \bigotimes_{P_i \,\in\, P} \effectrepertoire(M,\,P_i).
\label{eq:multinode-effect-repertoire}
\end{equation}
Again, because PyPhi {\tpm}s and repertoires are represented as tensors (multidimensional arrays), with each dimension corresponding to a node, the NumPy multiplication operator between distributions over different nodes is equivalent to the tensor product. Thus the effect repertoire over an arbitrary purview is trivially implemented by taking the product of the effect repertoires over each purview node with \texttt{numpy.multiply()}.

\subsubsection{The cause repertoire}

The cause repertoire of a single-element mechanism \(M_i \in S\) over a purview \(P \subseteq S\) is defined as a conditional probability distribution over the states of the purview at \(t-1\) given the current state of the mechanism,
\begin{equation}
  \causerepertoire(M_i,\,P) \;\coloneqq\; \Pr(P_{t-1} \mid M_{i,t} = m_{i,t}).
\label{eq:cause-repertoire}
\end{equation}
As with the effect repertoire, it is obtained by conditioning and marginalizing the \tpm. However, because the \tpm gives conditional probabilities of states at \(t+1\) given the state at \(t\), Bayes' rule is first applied to express the cause repertoire in terms of a conditional distribution over states at \(t - 1\) given the state at \(t\), 
\[
  \Pr(P_{t-1} \mid M_{i,t} = m_{i,t})
  \;=\; 
  \dfrac{ \Pr(m_{i,t} \mid P_{t-1}) \; \Pr(P_{t-1}) }{ \Pr(m_{i,t}) }.
\]
where the marginal distribution \(\Pr(P_{t-1})\) over previous states is the uniform distribution. In this way, the analysis captures how a mechanism in a state constrains a purview without being biased by whether certain states arise more frequently than others in the dynamical evolution of the system~\cite{ay2008information,balduzzi2008integrated,hoel2013quantifying, oizumi2014phenomenology}. Then the cause repertoire can be calculated by marginalizing over the states of non-mechanism elements \(M' = S \setminus M_i \) (now corresponding to columns in the state-by-state \tpm) and non-purview elements \(P' = S \setminus P\) (now corresponding to rows),
\[
\begin{multlined}
  \dfrac{\Pr(m_{i,t} \mid P_{t-1}) \; \Pr(P_{t-1}) }{ \Pr(m_{i,t}) } \\[10pt]
  \begin{aligned}
    &\;=\; \dfrac{\displaystyle \left( \frac{1}{\lvert \Omega_{P'} \rvert}\; \sum_{p'_{t-1}\,\in\,\Omega_{P'}}\; \frac{1}{\lvert \Omega_{M'} \rvert}\; \sum_{m'_t\,\in\,\Omega_{M'}} \; \Pr(m_{i,t},\, m'_t \mid P_{t-1},\, P'_{t-1} = p'_{t-1}) \right) \Pr(P_{t-1}) }{ \displaystyle \frac{1}{\lvert \Omega_{M'} \rvert}\; \sum_{m'_t\,\in\,\Omega_{M'}}\; \Pr(m_{i,t},\, m'_t) } \\
    &\;=\; \dfrac{\displaystyle \left( \frac{1}{\lvert \Omega_{P'} \rvert}\; \sum_{p'_{t-1}\,\in\,\Omega_{P'}}\; \sum_{m'_t\,\in\,\Omega_{M'}} \; \Pr(m_t,\, m'_t \mid P_{t-1},\, P'_{t-1} = p'_{t-1}) \right) \Pr(P_{t-1}) }{ \displaystyle \sum_{m'_t\,\in\,\Omega_{M'}}\; \Pr(m_{i,t},\, m'_t) }.
  \end{aligned}
\end{multlined}
\] 

In PyPhi, the “backward” conditional probabilities \(\Pr(m_{i,t} \mid P_{t-1})\) for a single mechanism node are obtained by indexing into the last dimension of the node's \tpm with the state $m_{i,t}$ and then marginalizing out non-purview nodes via \texttt{pyphi.tpm.marginalize\_out()}. As with the effect repertoire, the resulting array is then multiplied by an array of ones with the desired shape in order to obtain a distribution over the entire purview. Finally, because in this case the probabilities were obtained from columns of the \tpm, which do not necessarily sum to 1, the distribution is normalized with \texttt{pyphi.distribution.normalize()}.

The cause repertoire of a mechanism with multiple elements is the normalized tensor product of the cause repertoires of each individual mechanism element, 
\begin{equation}
  \causerepertoire(M,\,P) \;=\; \frac{1}{K} \; \bigotimes_{M_i \,\in\, M} \causerepertoire(M_i,\,P),
\label{eq:multinode-cause-repertoire}
\end{equation} 
where 
\[
  K \;=\; \sum_{p_{t-1}\,\in\,\Omega_P} \; \prod_{m_{i,t} \,\in\, \Omega_M} \Pr(P_{t-1} = p_{t-1} \mid M_{i,t} = m_{i,t})
\] 
is a normalization factor that ensures that the distribution sums to \(1\). This is implemented in PyPhi via \texttt{numpy.multiply()} and \texttt{pyphi.distribution.normalize()}. For a more complete illustration of these procedures, see \siref{sup:presentation}.

\hypertarget{sec:organization}{%
\subsection{Code organization and interface design}\label{sec:organization}
}

The postulates of \iit induce a natural hierarchy of computations~\cite[Supplementary Information S2]{tononi2016integrated}, and PyPhi's implementation mirrors this hierarchy by using object-oriented programming (\cref{tab:correspondence}) and factoring the computations into compositions of separate functions where possible. One advantage of this approach is that each level of the computation can be performed independently of the higher levels; for example, if one were interested only in the \mie of certain mechanisms rather than the full \mics, then one could simply call \texttt{Subsystem.effect\_mip()} on those mechanisms instead of calling \computesia and extracting them from the resulting \pyphisia object (this is especially important in the case of large systems where the full calculation is infeasible). Separating the calculation into many subroutines and exposing them to the user also has the advantage that they can be easily composed to implement functionality that is not already built-in.

\begin{table}[!ht]
\centering
\caption{
{\bf Correspondence between theoretical objects and PyPhi objects.}}\label{tab:correspondence}
{\renewcommand{\arraystretch}{1.5}
\begin{tabular}{r|>{\raggedright\arraybackslash}p{3in}}
\textbf{Theoretical object}      & \textbf{PyPhi object}                                                               \\ \thickhline
Discrete dynamical system        & \texttt{Network}                                                                    \\ \hline
Candidate system                 & \texttt{Subsystem}                                                                  \\ \hline
System element                   & \texttt{Node} in \texttt{Subsystem.nodes}                                           \\ \hline
System state                     & Python \texttt{tuple} containing a \texttt{0} or \texttt{1} for each node           \\ \hline
Mechanism                        & Python \texttt{tuple} of node indices                                               \\ \hline
Purview                          & Python \texttt{tuple} of node indices                                               \\ \hline
Repertoire over a purview \(P\)  & NumPy array with \(\lvert P \rvert\) dimensions, each of size 2                     \\ \hline
\mip                             & The \texttt{partition} attribute of the \texttt{RepertoireIrreducibilityAnalysis} 
                                   returned by \texttt{Subsystem.cause\_mip()} or \texttt{Subsystem.effect\_mip()}     \\ \hline
\mic and \mie                    & \texttt{MaximallyIrreducibleCause} and \texttt{MaximallyIrreducibleEffect}          \\ \hline 
Concept                          & \pyphiconcept                                                                       \\ \hline
\(\varphi\)                      & The \texttt{phi} attribute of a \pyphiconcept                                       \\ \hline 
\ces                             & \pyphices                                                                           \\ \hline 
\(\Phi\)                         & The \texttt{phi} attribute of a \pyphices                                           \\ \hline 
\mics                            & The \texttt{ces} attribute of the \pyphisia returned by \computemajorcomplex        \\ \hline 
Complex                          & The \texttt{subsystem} attribute of the \pyphisia returned by \computemajorcomplex
\end{tabular}}
\end{table}

\hypertarget{sec:config}{%
\subsection{Configuration}\label{sec:config}
}

Many aspects of PyPhi's behavior may be configured via the \texttt{pyphi.config} object. The configuration can be specified in a YAML file~\cite{yaml}; an \href{https://github.com/wmayner/pyphi/blob/master/pyphi_config.yml}{example} is available in the GitHub repository. When PyPhi is imported, it checks the current directory for a file named \texttt{pyphi\_config.yml} and automatically loads it if it exists. Configuration settings can also be loaded on the fly from an arbitrary file with the \texttt{pyphi.config.load\_config\_file()} function.

Alternatively, \texttt{pyphi.config.load\_config\_dict()} can load configuration settings from a Python dictionary. Many settings can also be changed by directly assigning them a new value.

Default settings are used if no configuration is provided. A full description of the available settings and their default values is available in the \href{https://pyphi.readthedocs.io/page/configuration}{``Configuration'' section of the online documentation}. 

\hypertarget{sec:optimizations}{%
\subsection{Optimizations and approximations}\label{sec:optimizations}
}

Here we describe various optimizations and approximations used by the software to reduce the complexity of the calculations (see \sectionref{sec:limitations}). Memoization and caching optimizations are described in \siref{sup:caching}.

\hypertarget{sec:connectivity}{%
\subsubsection{Connectivity optimizations}\label{sec:connectivity}
}

As mentioned in \sectionref{sec:input}, providing connectivity information explicitly with a \cm can greatly reduce the time complexity of the computations, because in certain cases missing connections imply reducibility \emph{a priori}.

For example, at the system level, if the subsystem is not strongly connected then \(\Phi\) is necessarily zero. This is because a unidirectional cut between one system part and the rest can always be found that will not actually remove any edges, so the {\ces}s with and without the cut will be identical (see \siref{sup:strong-connectivity} for proof). Accordingly, PyPhi immediately excludes these subsystems when finding the major complex of a system.

Similarly, at the mechanism level, PyPhi uses the \cm to exclude certain purviews from consideration when computing a \mic or \mie by efficiently determining that repertoires over those purviews are reducible without needing to explicitly compute them. Suppose there are two sets of nodes \(X\) and \(Y\) for which there exist partitions \(X = (X_1,\; X_2)\) and \(Y = (Y_1,\; Y_2)\) such that there are no edges from \(X_1\) to \(Y_2\) and no edges from \(X_2\) to \(Y_1\). Then the effect repertoire of mechanism \(X\) over purview \(Y\) can be factored as 
\[
\begin{multlined}
  \effectrepertoire(X, Y) \;=\; \\
    \hspace*{3em} \effectrepertoire(X_1, Y_1) \;\otimes\; \effectrepertoire(X_2, Y_2),
\end{multlined}
\] 
and the cause repertoire of mechanism \(Y\) over purview \(X\) can be factored as 
\[
\begin{multlined}
  \causerepertoire(Y, X) \;=\; \\
    \hspace*{3em} \causerepertoire(Y_1, X_1) \;\otimes\; \causerepertoire(Y_2, X_2).
\end{multlined}
\] 
Thus in these cases the mechanism is reducible for that purview and \(\varphi = 0\) (see \siref{sup:block-factorable} for proof).

\hypertarget{sec:emd}{%
\subsubsection{Analytical solution to the earth mover's distance}\label{sec:emd}
}

One of the repertoire distances available in PyPhi is the earth mover's distance (\emd), with the Hamming distance as the ground metric. Computing the \emd between repertoires is a costly operation, with time complexity \(O(n 2^{3n})\) where \(n\) is the number of nodes in the purview~\cite{pele2009fast}. However, when comparing effect repertoires, PyPhi exploits a theorem that states that the \emd between two distributions $p$ and $q$ over multiple nodes is the sum of the {\emd}s between the marginal distributions over each individual node, if $p$ and $q$ are independent. This analytical solution has time complexity \(O(n)\), a significant improvement over the general \emd algorithm (note that this estimate does not include the cost of computing the marginals, which already have been computed to obtain the repertoires). By the conditional independence property (Eq.~\ref{eq:conditional-independence}), the conditions of the theorem hold for \emd calculations between effect repertoires, and thus the analytical solution can be used for half of all repertoire calculations performed in the analysis. The theorem is formally stated and proved in \siref{sup:emd}.

\hypertarget{sec:approximations}{%
\subsubsection{Approximations}\label{sec:approximations}
}

Currently, two approximate methods of computing \(\Phi\) are available. These can be used via settings in the PyPhi configuration file (they are disabled by default):

\begin{enumerate}
\def\labelenumi{\arabic{enumi}.}
\item
  \texttt{pyphi.config.CUT\_ONE\_APPROXIMATION} (the ``cut one'' approximation), and
\item
  \texttt{pyphi.config.ASSUME\_CUTS\_CANNOT\_CREATE\_NEW\_CONCEPTS} (the ``no new concepts'' approximation).
\end{enumerate}

In both cases, the complexity of the calculation is greatly reduced by replacing the optimal partitioned \ces by an approximate solution. The system's \(\Phi\) value is then computed as usual as the difference between the unpartitioned \ces and the approximate partitioned \ces.

\paragraph*{Cut one.} 
The ``cut one'' approximation reduces the scope of the search for the \mip over possible system cuts. Instead of evaluating the partitioned \ces for each of the \(2^n\) unidirectional bipartitions of the system, only those \(2n\) bipartitions are evaluated that sever the edges from a single node to the rest of the network or vice versa. Since the goal is to find the minimal \(\Phi\) value across all possible partitions, the ``cut one'' approximation provides an upper bound on the exact \(\Phi\) value of the system.

\paragraph*{No new concepts.} 
For most choices of mechanism partitioning schemes and distance measures, it is possible that the \ces of the partitioned system contains concepts that are reducible in the unpartitioned system and thus not part of the unpartitioned \ces. For this reason, PyPhi by default computes the partitioned \ces from scratch from the partitioned \tpm. Under the ``no new concepts'' approximation, such new concepts are ignored. Instead of repeating the entire \ces computation for each system partition, which requires reevaluating all possible candidate mechanisms for irreducibility, only those mechanisms are taken into account that already specify concepts in the unpartitioned \ces. In many types of systems, new concepts due to the partition are rare. Approximations using the ``no new concepts'' option are thus often accurate. Note, however, that this approximation provides neither a theoretical upper nor lower bound on the exact \(\Phi\) value of the system.

\hypertarget{sec:limitations}{%
\subsection*{Limitations}\label{sec:limitations}
}

PyPhi's main limitation is that the algorithm is exponential time in the number of nodes, \(O(n 53^n)\). This is because the number of states, subsystems, mechanisms, purviews, and partitions that must be considered each grows exponentially in the size of the system. This limits the size of systems that can be practically analyzed to {\textasciitilde}10--12 nodes. For example, calculating the major complex of systems of three, five, and seven stochastic majority gates, connected in a circular chain of bidirectional edges, takes {\textasciitilde}1 s, {\textasciitilde}16 s, and {\textasciitilde}2.75 h respectively (parallel evaluation of system cuts, \(32\times3.1\)GHz CPU cores). Using the ``cut one'' approximation, these calculations take {\textasciitilde}1 s, {\textasciitilde}12 s, and {\textasciitilde}0.63 h. In practice, actual execution times are substantially variable and depend on the specific network under analysis, because network structure determines the effectiveness of the optimizations discussed above.

Another limitation is that the analysis can only be meaningfully applied to a system that is Markovian and satisfies the conditional independence property. These are reasonable assumptions for the intended use case of the software: analyzing a causal \tpm derived using the calculus of perturbations~\cite{pearl2009causality}. However, there is no guarantee that these assumptions will be valid in other circumstances, such as {\tpm}s derived from observed time series (\eg, EEG recordings). Whether a system has the Markov property and conditional independence property should be carefully checked before applying the software in novel contexts.

\hypertarget{sec:availability}{%
\section{Availability and future directions}\label{sec:availability}
}

PyPhi can be installed with Python's package manager via the command `\texttt{pip~install~pyphi}' on Linux and macOS systems equipped with Python 3.4 or higher. It is open-source and licensed under the GNU General Public License v3.0. The source code is version-controlled with \texttt{git} and hosted in a public repository on GitHub at \texttt{\small\url{https://github.com/wmayner/pyphi}}. Comprehensive and continually-updated documentation is available online at \texttt{\small\url{https://pyphi.readthedocs.io}}. The \texttt{pyphi-users} mailing list can be joined at \texttt{\small\url{https://groups.google.com/forum/\#!forum/pyphi-users}}. A web-based graphical interface to the software is available at \texttt{\small\url{http://integratedinformationtheory.org/calculate.html}}.

Several additional features are in development and will be released in future versions. These include a module for calculating \(\Phi\) over multiple spatial and temporal scales, as theoretically required by the exclusion postulate (in the current version, the \texttt{Network} is assumed to represent the system at the spatiotemporal timescale at which \(\Phi\) is maximized~\cite{hoel2016can,marshall2016black}), and a module implementing a calculus for ``actual causation'' as formulated in~\cite{albantakis2017actual} (preliminary versions of these modules are available in the current release). The software will also be updated to reflect developments in \iit and further optimizations in the algorithm.

\section{Acknowledgments}

We thank Andrew Haun, Leonardo Barbosa, Sabrina Streipert, and Erik Hoel for their valuable feedback as early users of the software.

WGPM, WM, LA, RM and GT received funding from the Templeton World Charities Foundbtion (Grant \#TWCF 0067/AB41). WGPM and GF were also supported by National Research Service Award (NRSA) T32 GM007507.

\section{Supporting information}
\vspace{6pt}

\paragraph*{S1~Calculating~\(\Phi\).}\label{sup:presentation}
{\bf Illustration of the algorithm.}

\paragraph*{S2~Appendix.}\label{sup:caching}
{\bf Memoization and caching optimizations.}

\paragraph*{S3~Appendix.}\label{sup:strong-connectivity}
{\bf Proof of the strong connectivity optimization.}

\paragraph*{S4~Appendix.}\label{sup:block-factorable}
{\bf Proof of the block-factorable optimization.}

\paragraph*{S5~Appendix.}\label{sup:emd}
{\bf Proof of an analytical solution to the \emd between effect repertoires.}

\label{mylastpage}
\newpage
\setcounter{page}{1}
\rfoot{\thepage/\pageref{sup2last}}
\subsection*{S2 Appendix}
\subsubsection*{Memoization and caching optimizations.}

During the course of computing a \pyphisia, several functions in PyPhi are called multiple times with the same input. For example, calculating \texttt{cause\_repertoire((A,B),\ (A,B,C)} and \texttt{cause\_repertoire((A,C),\ (A,B,C))} both require calculating \texttt{cause\_repertoire((A,),\ (A,B,C))}. Similarly, \texttt{cause\_repertoire((A,),\ (B,C))} is both the unpartitioned repertoire of the candidate MIP of mechanism \(A\) over purview \(BC\) and the first term in the expression for the partitioned repertoire of the candidate MIP of mechanism \(AB\) over purview \(ABC\) under the partition 
\[ 
  \frac{A}{BC} \times \frac{B}{A}.
\]
In such situations, a natural optimization technique to reduce expensive re-computation of these functions is \emph{memoization}: when a function is computed for a given input, the input-output pair is stored in a lookup table; if the function is called again with that input, the output is simply looked up in the table and returned, without computing the function again.

In PyPhi, memoization is applied to various functions at different levels of the algorithm, listed here:
\begin{itemize}
  \item \texttt{Subsystem.\_single\_node\_cause\_repertoire()} and \texttt{Subsystem.\_single\_node\_effect\_repertoire()}, the un-normalized cause repertoire of a single-node mechanism and the effect repertoire over a single-node purview, respectively (note that these functions are meant to be called internally by other PyPhi functions and not by the user, as indicated by the leading underscore);
  \item \texttt{Subsystem.cause\_repertoire()} and \texttt{Subsystem.effect\_repertoire()}, the cause and effect repertoires of arbitrary mechanism-purview pairs;
  \item \texttt{Subsystem.mic()} and \texttt{Subsystem.mie()}, the \mic and \mie of a mechanism;
  \item \texttt{Network.potential\_purviews()}, the purviews which are not necessarily reducible based on the \cm;
  \item Various utility functions such as \texttt{pyphi.distribution.max\_entropy\_distribution()}; and
  \item \computesia, the full \iit analysis of a \texttt{Subsystem}.
\end{itemize}

At the highest level, \computesia is memoized such that the \pyphisia object is stored persistently on the filesystem, rather than in memory, in a directory named \texttt{\_\_pyphi\_cache\_\_} (which is automatically created in the directory where the Python session was started). This means that the \pyphisia objects are automatically saved across Python sessions and can be quickly retrieved simply by running the same code, which is useful for interactive, exploratory work. This behavior can be controlled with the \texttt{pyphi.config.CACHE\_SIAS} configuration setting. Note, however, that this feature is primarily for convenience and is not intended to replace explicit data management. Additionally, care must be taken to erase or disable the cache when upgrading to new versions of PyPhi, as changes to the algorithm may invalidate previously computed output. A final caveat: because the results are stored on the filesystem, they can accumulate and occupy a large amount of disk space if the \texttt{\_\_pyphi\_cache\_\_} directory is not periodically removed.

\label{sup2last}
\newpage
\setcounter{page}{1}
\rfoot{\thepage/\pageref{sup3last}}
\subsection*{S3 Appendix}
\subsubsection*{Proof of the strong connectivity optimization.}

\begin{theorem*}[strong connectivity]
If \(G = (V, E)\) is a directed graph that is not strongly connected, then \(\Phi(G) = 0\).
\end{theorem*}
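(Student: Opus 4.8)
The plan is to exhibit a single unidirectional system cut that removes no edges whatsoever, so that the partitioned subsystem coincides exactly with the unpartitioned one. Its \ces is then identical and the \ces distance between the two is zero; since \(\Phi\) is the minimum over all unidirectional bipartitions of this (nonnegative) distance, producing one partition of distance zero forces \(\Phi(G) = 0\).

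First I would reduce the claim to a structural fact about \(G\). Because \(G\) is not strongly connected we have \(|V| \geq 2\), and its condensation into strongly connected components (\textsc{SCC}s) is a directed acyclic graph on at least two vertices. Every finite nonempty acyclic digraph has a sink vertex, so I would pick a sink \textsc{SCC}~\(C\); having at least two \textsc{SCC}s guarantees \(C \subsetneq V\), so both \(C\) and \(V \setminus C\) are nonempty. By the defining property of a sink in the condensation, no edge of \(G\) is directed from \(C\) into \(V \setminus C\) (edges internal to \(C\) and edges entering \(C\) are irrelevant). Taking the unidirectional bipartition \((A, B) = (C,\, V \setminus C)\) and cutting the edges directed from \(A\) to \(B\), the preceding observation shows that the cut edge set \(E_{\cut}\) is empty. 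Recalling the action of the cut on the \tpm---each cut edge \((a,b)\) modifies node~\(b\)'s individual \tpm by marginalizing over the state of \(a\) at \(t\)---an empty \(E_{\cut}\) modifies no node \tpm s, so the partitioned \tpm equals the original. Since the \ces is a deterministic function of the \tpm and the chosen state, the partitioned \ces equals the unpartitioned \ces verbatim, and the \ces distance between them is \(0\).

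The conclusion is then immediate: \(\Phi(G)\) is the minimum over all unidirectional bipartitions of the nonnegative \ces distance between the unpartitioned and partitioned \ces s, and we have produced a valid bipartition achieving distance \(0\), whence \(\Phi(G) = 0\). The only real content is the graph-theoretic existence of a proper, nonempty ``closed'' part with no outgoing crossing edges, and I expect locating a sink \textsc{SCC} in the condensation and confirming it is a proper subset to be the main step; the complementary symmetric argument using a source \textsc{SCC} and cutting in the opposite direction works equally well. By contrast, the assertion that a null cut leaves the \tpm, and therefore the \ces, unchanged follows directly from the definitions and requires no computation.
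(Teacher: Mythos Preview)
Your argument is correct and shares the paper's high-level strategy: exhibit a unidirectional bipartition that severs no edges, so the partitioned \tpm\ and hence the \ces\ are unchanged, the \ces\ distance is zero, and the minimum over cuts is zero. The specific graph-theoretic construction differs, however. The paper fixes an arbitrary component \(G_1\), lets \(Y\) be the union of components that receive no edge from \(G_1\), sets \(X = V \setminus Y\), and asserts there are no edges from \(X\) to \(Y\); but that step is not fully justified as written (for instance, on the path \(a \to b \to c\) with \(G_1 = \{a\}\) one gets \(Y \supseteq \{c\}\), yet the edge \(b \to c\) crosses from \(X\) to \(Y\)). Your choice of a sink \textsc{SCC} in the condensation sidesteps this issue entirely: a sink has no outgoing arcs in the condensation by definition, so the absence of crossing edges from \(C\) to \(V\setminus C\) is immediate. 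Your route is therefore both shorter and more airtight, while delivering exactly the same conclusion the paper intends.
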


\begin{proof}

Since \(G\) is not strongly connected, \(G\) contains \(n \geq 2\) strongly connected components, which we arbitrarily label 
\[
  G_1, G_2, \ldots, G_n \;=\; (V_1, E_1), (V_2, E_2), \ldots, (V_n, E_n).
\]
Let \(E_{j,k}\) denote the set of directed edges from nodes in component \(G_j\) to those in component \(G_k\), \(\{(a, b) \in E \mid a \in V_j \text{ and } b \in V_k\}\).

Consider the first component \(G_1\). For every other component \(G_i,\; i \neq 1\), either \(E_{1,i} = \varnothing\) or \(E_{i,1} = \varnothing\), because otherwise \(G_1\) and \(G_i\) would not be distinct connected components. Now let \(\overline{G_1}\) be the indices of components that receive no edges from \(G_1\), \(\{i \in {1,\ldots,n} \mid E_{1,i} = \varnothing \}\). Then let \(Y\) be the union of the nodes in these components, \[ Y = \bigcup_{i\;\in\;\overline{G_1}} V_i\;, \] and let \(X = V \setminus Y\). Then \(X\) and \(Y\) form a partition of \(V\) such that there are no edges from any nodes in \(X\) to any nodes in \(Y\).

Now consider the system cut \(c(X, Y)\) that cuts edges from nodes in \(X\) to nodes in \(Y\). Because there are no such edges, none of the node {\tpm}s are changed after applying the cut, and thus the subsystem \tpm is unchanged because it is the product of the node {\tpm}s. Since the cause-effect structure of a system is a function of the subsystem's \tpm, the cause-effect structure \(C_{c(X,Y)}\) of the partitioned subsystem and the cause-effect structure \(C\) of the unpartitioned subsystem are identical.

Let \(\Phi_{c(X,Y)}(G)\) be the \(\Phi\) value of \(G\) with respect to \(c(X,Y)\). By definition, this is the \texttt{ces\_distance} between \(C\) and \(C_{c(X,Y)}\). The \texttt{ces\_distance} function is a metric, so since \(C_{c(X,Y)} = C\) we have that 
\[
\texttt{ces\_distance}(C, C_{c(X,Y)}) \;=\; 0
\] 
by non-negativity of metrics, and thus \(\Phi_{c(X,Y)}(G) = 0\). Now, by definition, \[ \Phi(G) \;=\; \min_{c \;\in\; \mathbb{C}} \; \Phi_c(G) \] where \(\mathbb{C}\) is the set of all system cuts. Since \(\Phi_{c(X,Y)}(G) = 0\), by non-negativity we have \(\Phi(G) = 0\).

\end{proof}

\label{sup3last}
\newpage
\setcounter{page}{1}
\rfoot{\thepage/\pageref{sup4last}}
\subsection*{S4 Appendix}
\subsubsection*{Proof of the block-factorable optimization.}

\begin{definition*}[block diagonal]
An \(n \by m\) matrix \(\mathbf{A}_{n,m}\) is said to be \emph{block diagonal} if it can be written as 
\[ 
  \mathbf{A}_{n,m} \;=\;
    \begin{bmatrix*}[l]
        \mathbf{B}_{s,t}      & \mathbf{0}_{s, m - t} \\ 
        \mathbf{0}_{n - s, t} & \mathbf{C}_{n - s, m - t} 
    \end{bmatrix*}, 
\] 
where \(1 \leq s < n\) and \(1 \leq t < m\).
\end{definition*}

We consider sub-matrices of the connectivity matrix \(\cmmath\) of the form \(\cmmath(\pi_{\textrm{row}}, \pi_{\textrm{col}})\), where 
\[
  {\left[\cmmath(\pi_{\textrm{row}}, \pi_{\textrm{col}})\right]}_{i, j} \;=\; {\left[\cmmath\right]}_{\pi_{\textrm{row}}(i), \pi_{\textrm{col}}(j)}. 
\]

\begin{definition*}[block factorable]
A mechanism-purview pair \((M, P)\) is said to be \emph{block factorable} if there exists a permutation \(\pi_M\) of the mechanism indices and a permutation \(\pi_P\) of the purview indices such that \(\cmmath(\pi_M, \pi_P)\) is block diagonal (for effect purviews) or \(\cmmath(\pi_P, \pi_M)\) is block diagonal (for cause purviews).
\end{definition*}

\begin{theorem*}[block reducibility]
If a mechanism-purview pair is block factorable, then it is reducible \((\varphi = 0)\).
\end{theorem*}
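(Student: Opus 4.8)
The plan is to translate the combinatorial block-diagonal condition into the edge-absence hypotheses of the factorization result stated in \sectionref{sec:connectivity}, and then to exhibit a single partition of the mechanism-purview pair for which the partitioned repertoire coincides exactly with the unpartitioned one, forcing \(\varphi = 0\).

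First I would unpack block factorability for an effect purview. By hypothesis there are permutations \(\pi_M, \pi_P\) making \(\cmmath(\pi_M, \pi_P)\) block diagonal, with blocks of sizes \(s \by t\) and \((n-s) \by (m-t)\), where \(n = \lvert M \rvert\) and \(m = \lvert P \rvert\). Reading off the permuted index sets, I define \(M_1 = \{\pi_M(1), \ldots, \pi_M(s)\}\), \(M_2 = M \setminus M_1\), \(P_1 = \{\pi_P(1), \ldots, \pi_P(t)\}\), and \(P_2 = P \setminus P_1\). Because rows index the mechanism and columns the purview, with \([\cmmath]_{a,b} = 1\) exactly when there is an edge \(a \to b\), the two vanishing off-diagonal blocks say precisely that there are no edges from \(M_1\) to \(P_2\) and none from \(M_2\) to \(P_1\). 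The constraints \(1 \leq s < n\) and \(1 \leq t < m\) guarantee that all four parts are nonempty, so \((M_1, M_2)\) and \((P_1, P_2)\) are genuine bipartitions.

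These are exactly the hypotheses of the factorization result in \sectionref{sec:connectivity} (with \(X = M\) and \(Y = P\)), so I may immediately conclude
\[
  \effectrepertoire(M, P) \;=\; \effectrepertoire(M_1, P_1) \;\otimes\; \effectrepertoire(M_2, P_2).
\]
Now consider the partition \((M_1/P_1) \by (M_2/P_2)\) that pairs \(M_1\) with \(P_1\) and \(M_2\) with \(P_2\). By the definition of the partitioned repertoire (\sectionref{sec:demo-mip}), its partitioned repertoire is the tensor product of the part repertoires, \ie \(\effectrepertoire(M_1, P_1) \otimes \effectrepertoire(M_2, P_2)\), which is identical to the unpartitioned repertoire by the displayed equation. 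Since the repertoire distance is non-negative and vanishes on identical distributions, this partition contributes distance zero; as \(\varphi\) is the minimum of that distance over all partitions, it follows that \(\varphi = 0\), so the pair is reducible.

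The cause case is symmetric: block factorability then supplies permutations making \(\cmmath(\pi_P, \pi_M)\) block diagonal, which by the same reading yields no edges from \(P_1\) to \(M_2\) and none from \(P_2\) to \(M_1\), matching the cause-repertoire factorization of \sectionref{sec:connectivity} and giving \(\causerepertoire(M, P) = \causerepertoire(M_1, P_1) \otimes \causerepertoire(M_2, P_2)\); the same partition then witnesses \(\varphi = 0\). The only real care needed---the step I would treat as the crux---is the bookkeeping that aligns the row/column roles of \(\cmmath(\pi_M, \pi_P)\) (resp.\ \(\cmmath(\pi_P, \pi_M)\)) with the source/target convention for edges, so that the two zero blocks map onto the correct edge-absence conditions, together with the observation that the induced bipartition lies in the partitioning scheme over which \(\varphi\) is minimized. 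Everything beyond that is a direct appeal to results already established.
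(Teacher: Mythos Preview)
Your argument is correct and structurally the same as the paper's: both read off the bipartition \(M_1,M_2,P_1,P_2\) from the block-diagonal form, consider the partition \(\frac{M_1}{P_1}\times\frac{M_2}{P_2}\), and show it yields distance zero, whence \(\varphi=0\) by minimality. The one real difference is in how you justify that the partitioned and unpartitioned repertoires coincide. The paper argues at the \tpm level: the cut severs no actual edges, so the subsystem \tpm is unchanged and the two repertoires are literally the same function of the same \tpm. You instead invoke the factorization identity displayed in \sectionref{sec:connectivity} as a black box. That is mathematically sound, but be aware that in the paper's organization the only justification cited for that identity is \siref{sup:block-factorable}---the very theorem you are proving---so treating it as ``already established'' risks circularity. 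To make your version self-contained, either derive the factorization directly from \cref{eq:multinode-effect-repertoire} (each single-node effect repertoire over a node of \(P_1\) depends only on that node's parents, hence is insensitive to \(M_2\), and likewise for \(P_2\) and \(M_1\)), or simply adopt the paper's one-line observation that the cut removes no edges and therefore leaves the \tpm unchanged.
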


\begin{proof}

Consider a mechanism \(M\) constituted of \(n\) elements and a purview \(P\) constituted of \(m\) elements. Assume without loss of generality that \(P\) is an effect purview. Since \((M, P)\) is block factorable, there exist permutations \(\pi_M\) and \(\pi_P\) such that \(\cmmath(\pi_M, \pi_P)\) is block diagonal, \emph{i.e.}, 
\[
  \cmmath(\pi_M, \pi_P) \;=\;
    \begin{bmatrix*}[l]
      \mathbf{B}_{s,t}      & \mathbf{0}_{s, m - t} \\ 
      \mathbf{0}_{n - s, t} & \mathbf{C}_{n - s, m - t} 
    \end{bmatrix*}, 
\] 
where \(1 \leq s < n\) and \(1 \leq t < m\).

We define a mechanism-purview partition 
\[
  c \;\coloneqq\; \frac{M_1}{P_1} \times \frac{M_2}{P_2} \,
\] 
that cuts edges from \(M_1\) to \(P_2\) and from \(M_2\) to \(P_1\), where 
\begin{alignat*}{10}
  M_1 \;&=\; \{\; & \pi_M(i) &\mid&\; 1     \,&\leq&\; i &<&\; s \,& + \,1 & \;\} \\
  M_2 \;&=\; \{\; & \pi_M(i) &\mid&\; s + 1 \,&\leq&\; i &<&\; n \,& + \,1 & \;\} \\
  P_1 \;&=\; \{\; & \pi_P(i) &\mid&\; 1     \,&\leq&\; i &<&\; t \,& + \,1 & \;\} \\
  P_2 \;&=\; \{\; & \pi_P(i) &\mid&\; t + 1 \,&\leq&\; i &<&\; m \,& + \,1 & \;\}
\end{alignat*} 
Note that \([\cmmath(\pi_M, \pi_P)]_{i,j} = 0\) if either \(i \in M_1\) and \(j \in P_2\) or \(i \in M_2\) and \(j \in P_1\). Thus there are no edges cut by \(c\), and it leaves the subsystem's \tpm unchanged. Since the effect repertoire of a mechanism-purview combination is a function of the subsystem's \tpm, the unpartitioned effect repertoire, \(\er(M, P)\) and the partitioned repertoire \(\er_c(M, P)\) are identical.

By definition, \(\varphi_c(M, P)\) is the distance between \(\er(M, P)\) and \(\er_c(M, P)\), so \(\varphi_c(M, P) = 0\). Now, by definition, \[ \varphi(M, P) =\; \min_{p \;\in\; \mathbb{P}} \; \varphi_p(M, P), \] where \(\mathbb{P}\) is the set of all partitions of \((M, P)\). Since \(\varphi_c(M, P) = 0\), by the non-negativity of metrics we have \(\varphi(M, P) = 0\). 

\end{proof}

\label{sup4last}
\newpage
\setcounter{page}{1}
\rfoot{\thepage/\pageref{sup5last}}
\subsection*{S5 Appendix}
\subsubsection*{Proof of an analytical solution to the \emd between effect repertoires.}

\begin{theorem*}[analytical \emd]
Consider two random variables \(X_1, X_2\) with corresponding state
spaces \(\Omega_1, \Omega_2\) and an `additive' metric \(D\), 
\[
  D((i_1, i_2),\, (j_1, j_2)) \;=\;
    D(i_1, j_1) + D(i_2, j_2) \quad 
    \forall\ (i_1, i_2)\ \text{and}\ (j_1, j_2) \in \Omega_1 \times \Omega_2.
\] 
Let \(p_1\) and \(q_1\) be two probability distributions on \(X_1\), and let \(p_2\) and \(q_2\) be probability distributions on \(X_2\). If \(X_1\) and \(X_2\) are independent, then the \emd between the joint distributions \(p = p_1p_2\) and \(q = q_1q_2\), with \(D\) as the ground metric, is equal to the sum of the {\emd}s between the marginal distributions: 
\[
  \emdmath(p, q) \;=\; \emdmath(p_1, q_1) + \emdmath(p_2, q_2).
\]
\end{theorem*}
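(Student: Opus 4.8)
The plan is to work directly from the definition of the earth mover's distance as a minimum-cost transport problem. Over a finite space with ground metric $D$, recall that
\[
  \emdmath(p,q) \;=\; \min_{\gamma}\; \sum_{x,y} \gamma(x,y)\, D(x,y),
\]
where the minimum ranges over all couplings $\gamma$ (joint distributions) whose two marginals are $p$ and $q$. Since the state spaces here are finite, this minimum is attained. I would prove the claimed identity by establishing the two matching inequalities separately.

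For the upper bound, I would take optimal couplings $\gamma_1$ of $(p_1,q_1)$ and $\gamma_2$ of $(p_2,q_2)$ and form the product plan $\gamma\bigl((i_1,i_2),(j_1,j_2)\bigr) = \gamma_1(i_1,j_1)\,\gamma_2(i_2,j_2)$ on $(\Omega_1 \by \Omega_2)^2$. A short check shows its marginals are $p_1 p_2 = p$ and $q_1 q_2 = q$, so it is a feasible coupling of $(p,q)$. Its cost, using the additivity of $D$, splits as
\[
  \sum \gamma\,\bigl(D(i_1,j_1)+D(i_2,j_2)\bigr) \;=\; \emdmath(p_1,q_1) + \emdmath(p_2,q_2),
\]
because the cross sums collapse (each of $\gamma_1,\gamma_2$ sums to $1$). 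Since $\emdmath(p,q)$ is a minimum over all couplings, it is at most this cost.

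For the lower bound, I would start from an \emph{arbitrary} coupling $\gamma$ of $(p,q)$ and project it onto each factor by marginalization, setting $\gamma_1(i_1,j_1) = \sum_{i_2,j_2} \gamma\bigl((i_1,i_2),(j_1,j_2)\bigr)$ and analogously for $\gamma_2$. The key step is verifying that $\gamma_1$ is genuinely a coupling of $(p_1,q_1)$: this is exactly where the independence hypothesis $p=p_1 p_2$, $q=q_1 q_2$ is used, since summing out $i_2$ collapses $p_2$ to $1$ and leaves $p_1$. Additivity of $D$ then lets the cost of $\gamma$ decompose as the cost of $\gamma_1$ plus the cost of $\gamma_2$, each bounded below by the corresponding \emd by definition. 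Minimizing over $\gamma$ gives $\emdmath(p,q) \ge \emdmath(p_1,q_1) + \emdmath(p_2,q_2)$, and the two inequalities yield equality.

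The main obstacle is the lower bound. The upper bound needs only a single cleverly chosen coupling, whereas the lower bound must simultaneously control \emph{every} coupling of the joint distributions. The marginalization-plus-additivity argument handles this cleanly, but the crux is recognizing that the independence assumption is precisely what guarantees the projected plans land on the correct factor distributions; without $p=p_1 p_2$ and $q=q_1 q_2$ the projections would fail to be couplings of the marginals and the decomposition would break down.
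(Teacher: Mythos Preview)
Your argument is correct. The lower bound is exactly the paper's: marginalize the optimal joint flow onto each coordinate and use additivity of $D$ to split the cost. For the upper bound you take a more direct route than the paper. The paper introduces an intermediate distribution $r \coloneqq q_1 p_2$, constructs separate flows $p \to r$ and $r \to q$ (each moving mass along only one coordinate), and then invokes the triangle inequality for $\emdmath$; you instead form the single product coupling $\gamma = \gamma_1 \gamma_2$ of the two factor optima and compute its cost in one line. Your version is shorter and does not rely on $\emdmath$ being a metric; the paper's two-stage construction is perhaps more visually transparent but strictly longer.

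One minor refinement of your commentary: the lower-bound marginalization only needs that $p_1,q_1$ (resp.\ $p_2,q_2$) are the $X_1$- (resp.\ $X_2$-) marginals of $p,q$, so that inequality in fact holds for arbitrary joints with those marginals. It is the \emph{upper} bound where the full product structure $p=p_1p_2$, $q=q_1q_2$ is genuinely essential, since without it your product plan $\gamma_1\gamma_2$ would not have marginals $p$ and $q$.
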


\begin{proof}

First, we demonstrate that 
\[
  \emdmath(p, q) \;\leq\; \emdmath(p_1, q_1) + \emdmath(p_2, q_2).
\]
To do this, we define a third probability distribution as an intermediate point, 
\[
  r \;\coloneqq\; q_1p_2.
\]
We define the following flow from \(p\) to \(r\), 
\[ 
  f_{p,r}(i_1, i_2, j_1, j_2) \;\coloneqq\;
    \begin{cases}
      p_2(i_2)f_{p_1,q_1}^*(i_1, j_1) &\text{if}~i_2 = j_2 \\
      0                              &\otherwise,
    \end{cases}
\] 
where \(f^*_{p_1,q_1}\) is the optimal flow for the \emd between \(p_1\) and \(q_1\). With this flow, we have 
\[
\begin{aligned}
  \emdmath(p, r) \;
  &\leq\; \sum_{i_1, i_2, j_1, j_2} f_{p,r}(i_1, i_2, j_1, j_2)D((i_1, i_2),\, (j_1, j_2)) \\
  &=\; \sum_{i_1, i_2, j_1} p_2(i_2)f^*_{p_1,q_1}(i_1, j_1)D(i_1, j_1) \\
  &=\; \sum_{i_2} p_2(i_2) \sum_{i_1, j_1} f^*_{p_1,q_1}(i_1, j_1)D(i_1, j_1) \\
  &=\; \emdmath(p_1, q_1)
\end{aligned}
\] 
We next define a flow from \(r\) to \(q\), 
\[
  f_{r,q}(i_1, i_2, j_1, j_2) \;\coloneqq\;
    \begin{cases}
      q_1(i_1)f_{p_2,q_2}^*(i_2, j_2) &\text{if}~i_1 = j_1 \\ 
      0                              &\otherwise,
    \end{cases}
\] 
where \(f^*_{p_2,q_2}\) is the optimal flow for the \emd between \(p_2\) and \(q_2\). With this flow, we have 
\[
\begin{aligned}
  \emdmath(r, q)  \;
  &\leq\; \sum_{i_1, i_2, j_1, j_2} f_{r,q}(i_1, i_2, j_1, j_2)D((i_1, i_2),\, (j_1, j_2)) \\
  &=\; \sum_{i_1, i_2, j_2} q_1(i_1)f^*_{p_2,q_2}(i_2, j_2)D(i_2, j_2) \\
  &=\; \sum_{i_1} q_1(i_1) \sum_{i_2, j_2} f^*_{p_2,q_2}(i_2, j_2)D(i_2, j_2) \\
  &=\; \emdmath(p_2, q_2)
\end{aligned}
\] 
Then using the triangle inequality (the \emd is a metric), we have 
\[
  \emdmath(p, q) \;\leq\; \emdmath(p, r) + \emdmath(r, q) \;\leq\; \emdmath(p_1, q_1) + \emdmath(p_2, q_2).
\] 
To complete the proof, we next demonstrate that 
\[
  \emdmath(p, q) \;\geq\; \emdmath(p_1, q_1) + \emdmath(p_2, q_2).
\] 
If \(f_{p,q}^*\) is the optimal flow for \(\emdmath(p, q)\), then define a flow between \(p_1\) and \(q_1\), 
\[
  f_1(i_1,j_1) \;\coloneqq\; \sum_{i_2, j_2} f_{p,q}^*(i_1, i_2, j_1, j_2),
\] 
and a flow between \(p_2\) and \(q_2\) 
\[
  f_2(i_2, j_2) \;\coloneqq\; \sum_{i_1, j_1} f_{p,q}^*(i_1, i_2, j_1, j_2).
\] 
Then using the additive property of the ground metric \(D\), 
\[
\begin{aligned}
  \emdmath(p, q) 
  \;&=\; \sum_{i_1, i_2, j_1, j_2} f_{p,q}^*(i_1, i_2, j_1, j_2)D((i_1, i_2),\, (j_1, j_2)) \\
  \;&=\; \sum_{i_1, i_2, j_1, j_2} f_{p,q}^*(i_1, i_2, j_1, j_2)D(i_1, j_1) + \sum_{i_1, i_2, j_1, j_2} f_{p,q}^*(i_1, i_2, j_1, j_2)D(i_2, j_2) \\
    & \begin{multlined}
        =\; \sum_{i_1, j_1} \left(\sum_{i_2, j_2} f_{p,q}^*(i_1, i_2, j_1, j_2) \right)D(i_1, j_1) \\
          \hspace*{3em} + \sum_{i_2, j_2} \left(\sum_{i_1, j_1} f_{p,q}^*(i_1, i_2, j_1, j_2) \right)D(i_2, j_2)
      \end{multlined} \\
  \;&=\; \sum_{i_1, j_1} f_1(i_1,j_1)D(i_1, j_1) + \sum_{i_2, j_2} f_2(i_2,j_2)D(i_2, j_2) \\
  \;&\geq\; \emdmath(p_1, q_1) + \emdmath(p_2, q_2).
\end{aligned}
\] 
Therefore \(\emdmath(p, q) = \emdmath(p_1, q_1) + \emdmath(p_2, q_2)\).

\end{proof}

Perhaps it is worth demonstrating that the flows \(f_1\), \(f_2\), \(f_{p,r}\) and \(f_{r,q}\) satisfy the \emd requirements. Consider \(f_{p,r}\), 
\[
  f_{p,r}(i_1, i_2, j_1, j_2) \;=\;
    \begin{cases} 
      p_2(i_2)f_{p_1,q_1}^*(i_1, j_1) &\text{if}~i_2 = j_2 \\
      0                               &\otherwise, 
    \end{cases}
\] 
where \(f^*_{p_1,q_1}\) is the optimal flow for the \emd between \(p_1\) and \(q_1\).

Since \(p_2(i_2) \geq 0\) (probability) and \(f_{p_1,q_1}^*(i_1, j_1) \geq 0\) (definition of optimal flow), 
\[
  f_{p,r}(i_1, i_2, j_1, j_2) \geq 0. 
\] 
Next, 
\[
\begin{aligned}
  \sum_{j_1, j_2} f_{p,r}(i_1, i_2, j_1, j_2) 
  \;&=\; \sum_{j_1} p_2(i_2)f_{p_1,q_1}^*(i_1, j_1) \\
  \;&=\; q_1(i_1)p_2(i_2) \\
  \;&=\; r(i_1, i_2),
\end{aligned}
\]
and 
\[
\begin{aligned}
  \sum_{i_1, i_2} f_{p,r}(i_1, i_2, j_1, j_2) 
  \;&=\; \sum_{i_1} p_2(j_2)f_{p_1,q_1}^*(i_1, j_1) \\
  \;&=\; q_1(j_1)p_2(i_2) \\
  \;&=\; r(j_1, j_2).
\end{aligned}
\]
Finally, 
\[
\begin{aligned}
  \sum_{i_1, i_2, j_1, j_2} f_{p,r}(i_1, i_2, j_1, j_2) 
  \;&=\; \sum_{i_2} p_2(i_2) \sum_{i_1, j_1} f_{p_1,q_1}^*(i_1, j_1) \\
  \;&=\; \sum_{i_1, j_1} f_{p_1,q_1}^*(i_1, j_1) \\
  \;&=\; 1
\end{aligned}
\] 
Thus \(f_{p,r}\) satisfies the criteria for a potential flow. The others follow similarly.

\label{sup5last}
\end{document}